\tikzset{v/.style={fill, circle, inner sep=1pt, minimum size=3mm}}
\tikzset{e/.style={draw, line width=0.3mm}}
\tikzset{a/.style={draw,  ->, >=stealth, line width=0.6mm, color=blue!90!black}}
\tikzset{vwhite/.style={v, fill=white, draw, line width=0.3mm}}
\tikzset{n/.style={fill, square, inner sep=1pt, minimum size=3mm}}
\spnewtheorem{observation}{Observation}{\bfseries}{}
\begin{document}

\title{Orthogonal Compaction Using Additional Bends}

\author{ Michael J\"unger\inst{1} \and Petra Mutzel\inst{2} \and Christiane Spisla\inst{2}}

\institute{University of Cologne, Cologne, Germany\\
 \email{mjuenger@informatik.uni-koeln.de}
\and TU Dortmund University, Dortmund, Germany\\
\email{\{petra.mutzel,christiane.spisla\}@cs.tu-dortmund.de}
}
\maketitle

\begin{abstract}Compacting orthogonal drawings is a challenging task. Usually algorithms try to compute drawings with small area or edge length while preserving the underlying orthogonal shape. We present a one-dimensional compaction algorithm that alters the orthogonal shape of edges for better geometric results.
An experimental evaluation shows that we were able to reduce the total edge length and the drawing area, but at the expense of additional bends. 
 
 \end{abstract}

\section{Introduction}

The compaction problem in orthogonal graph drawing deals with constructing an area-efficient drawing on the orthogonal grid. Every edge is drawn as a sequence of horizonal and vertical segments, where the vertices and bends are placed on grid points.
Compaction has been studied in the context of the \emph{topology-shape-metrics approach}~\cite{TPM}. Here, in a first phase a combinatorial embedding is computed that determines the topology of the layout with the goal to minimize the number of crossings. In the second phase, a dimensionless orthogonal shape of the graph is determined by fixing the angles between adjacent edges and the bends along the edges. The goal is to minimize the number of bends.
In the third phase, metrics are added to the orthogonal shape.
In this context, first the coordinates of vertices and bends are assigned to grid points so that the given orthogonal shape is maintained. Finally, the (orthogonal) compaction problem asks for a drawing minimizing geometric aestetic criteria, such as the area of the drawing or the total edge length. The shape is not allowed to change.

Since the orthogonal compaction problem is NP-hard~\cite{Patrignani:1999:COC:645932.673201}, in practice heuristics are used that fix the $x$- (or $y$-, resp.) coordinates and solve the resulting compaction problem in one dimension. Given an initial drawing, the one-dimensional compaction problem with the goal of minimizing the height (or width, resp.) of the layout can be transformed to the longest path problem in a directed acyclic graph. If in addition the total edge length shall be minimized, the problem can be solved by computing a minimum cost flow.  
\begin{figure}[thb]

\subfloat[]{
\begin{minipage}{0.45\textwidth}
	\centering
\includegraphics[scale=0.09]{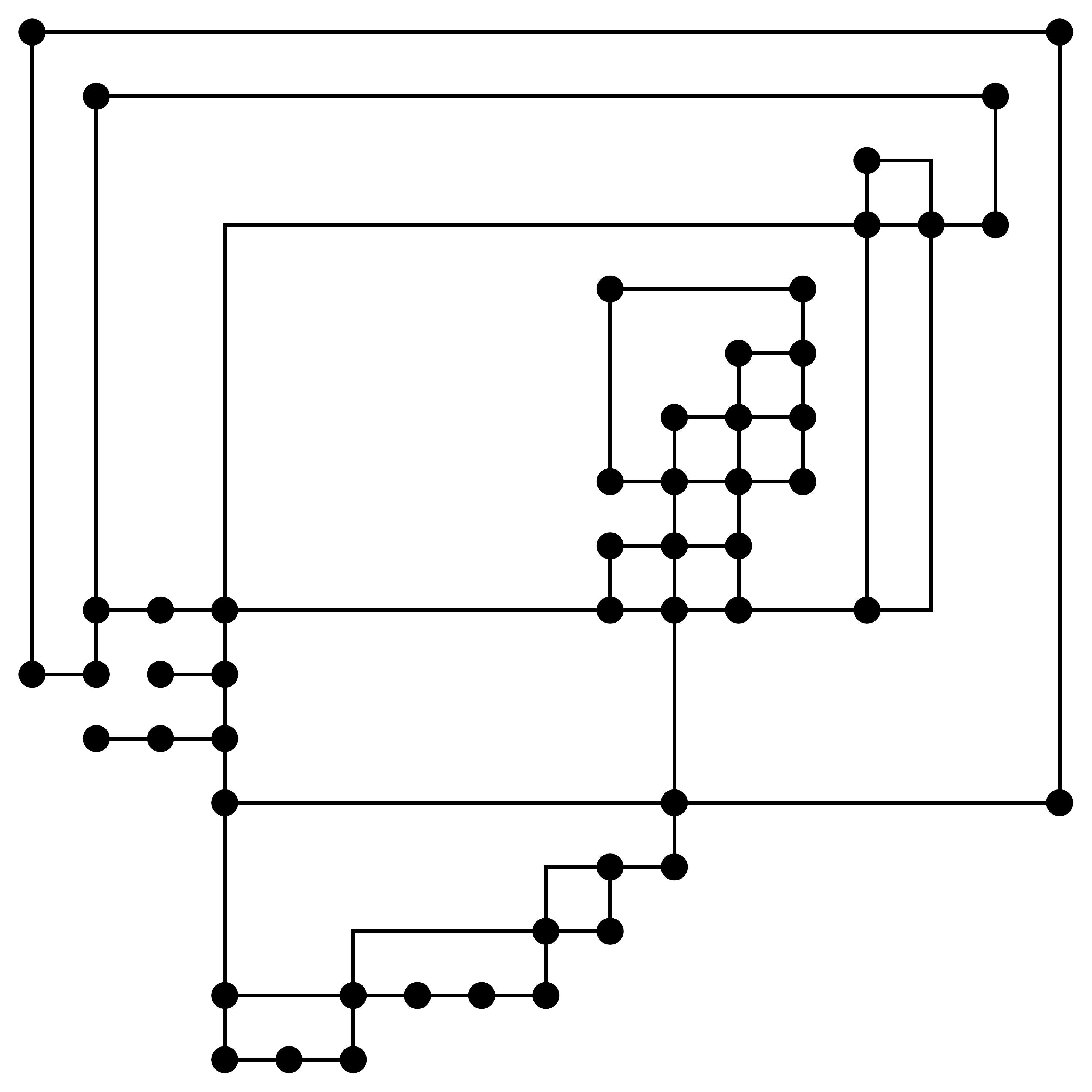}
\end{minipage}
}
\hfill
\subfloat[]{
\begin{minipage}{0.45\textwidth}	
	\centering
\includegraphics[scale=0.09]{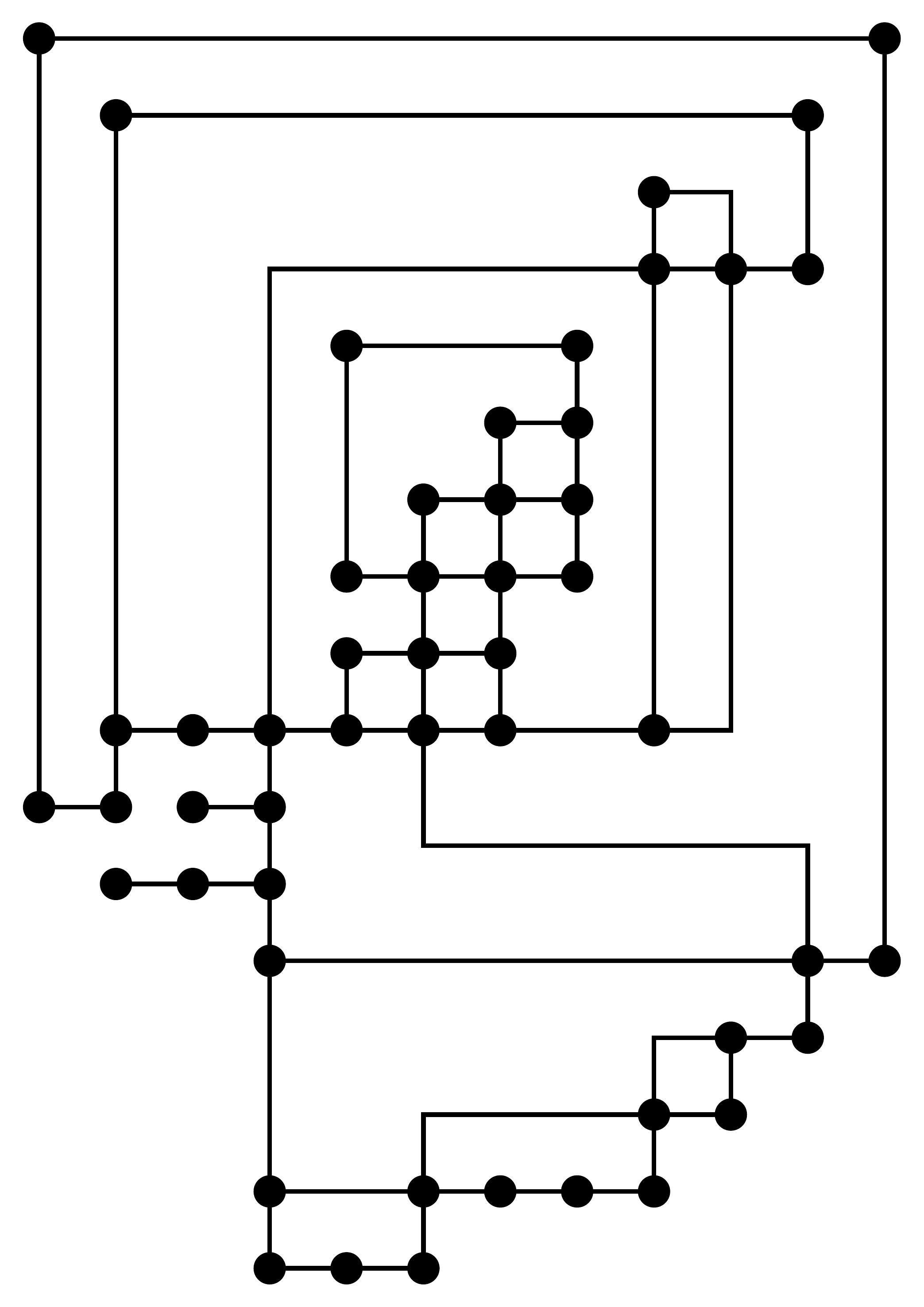}
\end{minipage}
}
\caption{(a) A drawing with large areas of white space due to shape restrictions. (b) Introducing two additional bends to one edge of the drawing leads to a smaller drawing.}
\label{erstesBeispiel}
\end{figure}
The topology-shape-metrics approach aims at drawings with a small number of crossings, a small number of bends, and a small drawing area. These goals are addressed in this order. And indeed, compared with other drawing methods, the number of crossings and bends is relatively small~\cite{DIBATTISTA1997303}. However, the layouts often contain large areas of white space. It seems that the goal of getting a small drawing area has not been achieved so far.
Consider the drawing in Fig.~\ref{erstesBeispiel}(a) which contains large areas of white space due to shape restrictions. 
By introducing two bends on one of the edges the drawing area can be reduced drastically.
This motivates us to study a compaction problem in which the shape conditions are relaxed.

This brings us back to the origin of the orthogonal compaction problem in VLSI-design (see, e.g.,~\cite{Lengauer:1990:CAI:92429}).
In contrast to the compaction problem considered in graph drawing, even the permutation of wires along the boundary of a component (and hence, changing the embedding) is allowed.

We suggest a moderate relaxation of the orthogonal compaction problem. 
More precisely, we suggest to study 
the \emph{one-dimensional monotone flexible edge compaction problem with fixed vertex star geometry}, henceforth the \emph{Fled-Five compaction  problem}, which asks for the minimization of the vertical (horizontal, resp.) edge length and allows changing the orthogonal shape of the edge, but preserves the $x$-monotonicity ($y$-monotonicity, resp.) of edge segments and prohibits changing the directions of the initial edge segments around the vertices. We present a polynomial-time algorithm based on a network flow model that solves the Fled-Five compaction problem to optimality. Our computational results show that repeated application of Fled-Five compaction in $x$- and $y$-direction is able to reduce the total edge length and the drawing area at the expense of additional bends.
 
This paper is organized as follows. We recall the state of the art in Sect.~\ref{stateOfTheArt} and some basic definition about orthogonal graph drawing and especially the  compaction phase in Sect.~\ref{Notation}. We present our new algorithm in Sect.~\ref{algorithm} and evaluate it experimentally in Sect.~\ref{test}.

\section{State-of-the-Art}
\label{stateOfTheArt}

Patrignani~\cite{Patrignani:1999:COC:645932.673201} has shown that planar orthogonal compaction is in general NP-hard and Bannister et al.~\cite{DBLP:journals/jgaa/BannisterES12} gave inapproximability results for the nonplanar case. But for some special cases there exist polynomial algorithms, e.g. if all faces are of rectangular shape~\cite{DBLP:books/ph/BattistaETT99} or if all faces are so-called \emph{turn-regular}~\cite{BRIDGEMAN200053} or have a unique completion~\cite{DBLP:conf/ipco/KlauM99}. Klau et al.~\cite{DBLP:conf/ipco/KlauM99} suggested a branch-and-cut approach to solve an integer linear program based on extending a pair of constraint graphs. 
However, in practice, heuristics are used which iteratively fix the $x$-, and then the $y$-coordinates, and solve the resulting one-dimensional compaction problem. This process is repeated until no further progress is made. One-dimensional compaction algorithms often use either network flow techniques or a longest path method in order to assign integer coordinates to the vertices, see e.g.~\cite{DBLP:conf/dagstuhl/1999dg} for an overview. An experimental comparison of planar compaction algorithms was presented by Klau et al.~\cite{Klau2001}.

Although there has been done some work to improve the quality of a drawing by changing its shape, e.g.~\cite{4M}, most compaction algorithms take as input an orthogonal representation and try to produce compact drawings with respect to that representation. This can lead to an unnecessarily large drawing area with unused space. Often better results in terms of area and edge length can be achieved if the orthogonal shape can be altered, as we have seen in Fig.~\ref{erstesBeispiel}. On the other hand, it might be desirable to not change a given drawing too much in order to preserve the mental map.

\section{Notation and Preliminary Results}
\label{Notation}

In this section we give basic definitions and notations. For more details on orthogonal drawings and graph drawing in general see e.g.~\cite{DBLP:books/ph/BattistaETT99},~\cite{DBLP:conf/dagstuhl/1999dg} or~\cite{DBLP:reference/crc/2013gd}.

\subsection{Orthogonal Graph Drawing}
For the rest of this paper we restrict ourselves to undirected \emph{4-graphs}, i.e. graphs whose vertices have at most four incident edges.
A graph $G=(V,E)$ with $|V|=n$ and $|E|=m$ is called \emph{planar} if it admits a drawing $\Gamma$ in the plane without edge crossings. Such a planar drawing of $G$ induces a \emph{(planar) embedding}, which is represented by a
circular ordered list of bordering edges for every \emph{face}. The unbounded region of a planar drawing is called \emph{external face}.

An \emph{orthogonal representation} $H$ is an extension of a planar embedding that gives combinatorial information about the orthogonal shape of a drawing. For every edge we provide information about the bends encountered while traversing the edge and the angle formed at vertices by two consecutive edges. If one of these angles is 270\degree\,or 360\degree\ we associate with $v$ a \emph{reflex corner}. An orthogonal representation is called \emph{normalized} if it has no bends.
An \emph{orthogonal grid drawing} $\Gamma$ of $G$ is a drawing in which every edge is drawn as a sequence of horizontal and vertical \emph{edge segments} and every vertex and bend has integer coordinates. Such a drawing induces an orthogonal representation $H_\Gamma$ and a \emph{star geometry} for every vertex fixing the directions of the initial line segments of its incident edges. If an edge first turns to the right and then to the left, or vice versa, we call this a \emph{double bend} and the edge segment between those two bends a \emph{middle segment}.
Every orthogonal representation can be normalized by replacing all bends in $H_\Gamma$ with dummy vertices of degree two, thus adding vertices to $G$ and $\Gamma$.

Since our new approach is based on a network flow model for one-dimensional compaction, we will give a brief introduction to minimum cost flows here. For more information about network flows, see~\cite{Ahuja:1993:NFT:137406}.
Let $N=(V_N,E_N)$ be a directed graph. Whenever we talk about flows we will call $N$ a \emph{network}, the members of $V_N$ \emph{nodes} and the members of $E_N$ \emph{arcs} (in contrast to vertices and edges in an undirected graph). Every arc $a$ has a \emph{lower bound} $l(a)\in \bbbr^{\geq 0}$, an \emph{upper bound} $u(a) \in \bbbr^{\geq 0} \cup \{\infty\}$ and a nonnegative \emph{cost} $c(a)$. A \emph{demand} $b(n) \in \bbbr$ is associated with every node. We call a function $x: A \rightarrow \bbbr^{\geq 0}$ a \emph{flow} if $x$ satisfies the following conditions:
\begin{align}
&l(a) \leq x(a) \leq u(a) &\forall\; a\in E_N \quad &\mbox{(capacity constraint)}\\
&\sum_{a=(k,l)}x(a) \;- \sum_{a=(j,k)} x(a) = b(k)  &\forall\; k \in V_N \quad &\mbox{(flow conservation)}
\end{align}
A \emph{minimum cost flow} is a flow $x$ with minimum total cost $c_x=\sum_{a\in E_N} x(a)c(a)$ under all feasible flows. The minimum cost flows we are interested in can be computed in $\mathcal{O}(|V_N|^{3/2}\log{}|V_N|)$ time~\cite{DBLP:journals/jgaa/CornelsenK12}.

\subsection{Compaction of Orthogonal Drawings}
We focus on the \emph{vertical (orthogonal) compaction problem} that receives as input a planar grid drawing $\Gamma$ of a graph $G$ with an orthogonal representation $H_\Gamma$, and asks for another planar orthogonal drawing $\Gamma^\prime$ of $G$ realizing $H_\Gamma$ so that the vertical edge length is minimized subject to fixed $x$-coordinates.
In the following we describe a flow-based method for vertical compaction similar to the coordinate assignment algorithm in Di Battista et al.~\cite{DBLP:books/ph/BattistaETT99}.

\label{improvComp}

Assume we have an initial grid drawing $\Gamma$. In a first step we normalize $H_\Gamma$ resulting in $\overline{\Gamma}$ and $\overline{H}_\Gamma$. Then we add vertical \emph{visibility edges} (so-called \emph{dissecting edges}).
We insert a vertical edge connecting each reflex corner with the vertex or edge that is visible in vertical direction, possibly introducing dummy vertices. This gives us $\widetilde{\Gamma}$ and $\widetilde{H}_\Gamma$.
This way we get rid of all reflex corners and have a representation with rectangular faces. 
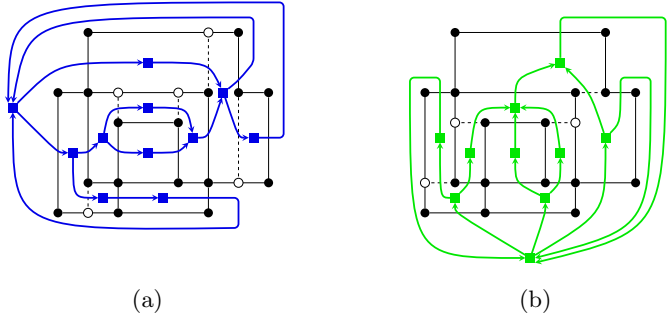
\begin{figure}[thb]
\begin{center}
\subfloat[]{
\resizebox{0.4\width}{!}{

	\begin{tikzpicture} 
\tikzset{v/.style={fill, circle, inner sep=1pt, minimum size=3mm}}
\tikzset{e/.style={draw, line width=0.3mm}}
\tikzset{a/.style={draw,  ->, >=stealth, line width=0.6mm, color=blue!90!black}}
\tikzset{vwhite/.style={v, fill=white, draw, line width=0.3mm}}
\tikzset{n/.style={color=blue!90!black, fill, rectangle, inner sep=1pt, minimum size=3mm}}
	
	\node[v] at (0,0) (a) {};
	\node[v] at (2,0) (b) {};
	\node[v] at (5,0) {};
	\node[v] at (5,1) (c) {};
	\node[v] at (7,1) (d) {};
	\node[v] at (7,4) (e) {};
	\node[v] at (6,4) (f) {};
	\node[v] at (6,6) (g) {};
	\node[v] at (1,6) (h) {};
	\node[v] at (1,4) (i) {};
	\node[v] at (0,4) {};
	
	\draw (a) -| (c) -| (e) -| (g) -| (i) -| (a);
	
	\node[v] at (1,1) (j) {};
	\node[v] at (2,1) {};
	\node[v] at (4,1) (k) {};
	\node[v] at (2,3) (l) {};
	\node[v] at (4,3) (m) {};
	\node[v] at (5,4) (n) {};
	
	\draw (j) -| (n) -| (j);
	\draw (b) |- (m) -- (k);
	
	\node[vwhite] at (2,4) (o) {};
	\draw[e, dashed] (o) -- (2,3);

	\draw[e, dashed] (1,0) -- (1,1);
	\node[vwhite] at (1,0) {};
	\draw[e, dashed] (4,4) -- (4,3);
	\node[vwhite] at (4,4) {};
	\draw[e, dashed] (5,6) -- (5,4);
	\node[vwhite] at (5,6) {};
	\draw[e, dashed] (6,1) -- (6,4);
	\node[vwhite] at (6,1) {};	
	
	\node[n] at (-1.5, 3.5) (n0) {};
	\node[n] at (0.5, 2) (n1) {};
	\node[n] at (1.5, 2.5) (n2) {};
	\node[n] at (3, 2) (n3) {};
	\node[n] at (3.5, 0.5) (n4) {};
	\node[n] at (4.5, 2.5) (n5) {};
	\node[n] at (5.5,4) (n6) {};
	\node[n] at (1.5,0.5) (n7) {};
	\node[n] at (3,3.5) (n8) {};
	\node[n] at (3,5) (n9) {};
	\node[n] at (6.5,2.5) (n10) {};
	
	\draw[a] (n0) .. controls(0,2) .. (n1);
	\draw[a] (n0.north east) .. controls(0,5) .. (n9);
	\draw[a] (n1) .. controls(0.5,0.5).. (n7);
	\draw[a] (n7) -- (n4);
	\draw[a] (n1) .. controls(1,2) .. (n2);
	\draw[a] (n2) .. controls(2,2) .. (n3);
	\draw[a] (n2) .. controls(1.7,3.5) .. (n8);
	\draw[a] (n8) .. controls(4.3,3.5) .. (n5);
	\draw[a] (n9) .. controls(5,5) .. (n6);
	\draw[a] (n6) .. controls(6,2.5) .. (n10);
	\draw[a] (n3) .. controls(4,2) .. (n5);
	\draw[a] (n5) .. controls(5,2.5) .. (n6);
	
	\draw[a, rounded corners] (n4) -- (6,0.5) -- (6,-0.5) .. controls(-1.5,-0.6) .. (n0);
	\draw[a, rounded corners] (n6) ..controls(6,4.4).. (6.5,5) -- (6.5,6.5) .. controls(-1,6.5) .. (n0.north);
	\draw[a, rounded corners] (n10) -- (7.5,2.5) -- (7.5,7) .. controls(-1.6,7) .. (n0.north west);
	
	\node[] at (3.5,-2){};
	\end{tikzpicture}
	
}
}
\hspace{12mm}
\subfloat[]{
\resizebox{0.4\width}{!}{

	\begin{tikzpicture}
	\tikzset{n/.style={color=green!90!black, fill, rectangle, inner sep=1pt, minimum size=3mm}}
	\tikzset{a/.style={draw,  ->, >=stealth, line width=0.6mm, color=green!90!black}}
	
	\node[v] at (0,0) (a) {};
	\node[v] at (2,0) (b) {};
	\node[v] at (5,0) {};
	\node[v] at (5,1) (c) {};
	\node[v] at (7,1) (d) {};
	\node[v] at (7,4) (e) {};
	\node[v] at (6,4) (f) {};
	\node[v] at (6,6) (g) {};
	\node[v] at (1,6) (h) {};
	\node[v] at (1,4) (i) {};
	\node[v] at (0,4) {};
	
	\draw (a) -| (c) -| (e) -| (g) -| (i) -| (a);
	
	\node[v] at (1,1) (j) {};
	\node[v] at (2,1) {};
	\node[v] at (4,1) (k) {};
	\node[v] at (2,3) (l) {};
	\node[v] at (4,3) (m) {};
	\node[v] at (5,4) (n) {};
	
	\draw (j) -| (n) -| (j);
	\draw (b) |- (m) -- (k);
	
	\draw[e, dashed] (5,4) -- (6,4);
	\draw[e, dashed] (0,1) -- (1,1);
	\draw[e, dashed] (1,3) -- (2,3);
	\draw[e, dashed] (4,3) -- (5,3);
	\node[vwhite] at (0,1) {};
	\node[vwhite] at (1,3) {};
	\node[vwhite] at (5,3) {};

	\node[n] at (3.5, -1.5) (n0) {};
	\node[n] at (1, 0.5) (n1) {};
	\node[n] at (4, 0.5) (n2) {};
	\node[n] at (3, 2) (n3) {};
	\node[n] at (3, 3.5) (n4) {};
	\node[n] at (6,2.5) (n5) {};
	\node[n] at (4.5,5) (n6) {};
	\node[n] at (0.5, 2.5) (n7) {};
	\node[n] at (1.5,2) (n8) {};
	\node[n] at (4.5,2) (n9) {};
	
	\draw[a] (n0) .. controls(1,0) .. (n1);
	\draw[a] (n0) .. controls(4,0) ..(n2);
	\draw[a] (n1) .. controls(0.5,0.5) .. (n7);
	\draw[a, rounded corners] (n7) --(0.5,4.5)--(-0.5,4.5).. controls(-0.5,-1.5).. (n0);
	\draw[a] (n1) .. controls(1.5,1)  .. (n8);
	\draw[a] (n8) .. controls(1.7,3.5) .. (n4);
	\draw[a] (n2) .. controls(4.5, 1) .. (n9);
	\draw[a] (n9) ..  controls(4.3,3.5) .. (n4);
	\draw[a] (n2) .. controls(3,1)  .. (n3);
	\draw[a] (n3) -- (n4);
	\draw[a] (n0) .. controls(6,0) .. (n5);
	\draw[a] (n4) .. controls(3,4.5) .. (n6);
	\draw[a] (n5) .. controls(5.5,4) .. (n6);
	\draw[a, rounded corners] (n5) .. controls(6.5,3) ..(6.5,4.5) -- (7.5,4.5) .. controls(7.5,-0.5).. (n0.east);
	
	\draw[a, rounded corners] (n6) -- (4.5,6.5) -- (8,6.5) .. controls(8,-0.6) .. (n0.south east);
	
	\node[] at (3.5,-2){};
	
	
\end{tikzpicture}

}
}
\end{center}
\caption[]{The flow networks $N_y$ for vertical compaction (a) and $N_x$ for horizontal compaction.(b). White vertices are dummy vertices and dashed edges are dummy edges.}

\label{classicalFlow}
\end{figure}
Now we are ready to construct the network $N_y$ for vertical compaction. For each face $f$ in representation $\widetilde{H}_\Gamma$ we add a node $n_f$ to $N_y$ with demand $b(n_f)=0$ and for every vertical edge $e$ with left face $f_l$ and right face $f_r$ we insert an arc $a_e = (n_{f_l} , n_{f_r})$ with lower bound $l(a_e)=1$ and upper bound $u(a_e)=\infty$. If $e$ is a dummy edge $a_e$ gets zero cost, otherwise a cost of one. See Fig.~\ref{classicalFlow} for an example.

Suppose we have computed a feasible flow. Now a unit of flow corresponds to one unit of vertical edge length. The $x$-coordinates remain unchanged. The capacity constraint assures that every edge gets a minimum length of one and the flow conservation constraint guarantees that every face is drawn as a proper rectangle. 
Finally, the visibility edges and artificial vertices can be removed. 

\begin{lemma}
\label{classicalFlowOptimalFor1DIM}
The above algorithm solves the vertical compaction problem to optimality.
\end{lemma}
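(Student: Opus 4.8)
The plan is to establish optimality by showing two things: (1) every feasible flow in $N_y$ corresponds to a valid vertical compaction of $\widetilde{H}_\Gamma$ (and hence of $H_\Gamma$) that respects the orthogonal shape and the fixed $x$-coordinates, and (2) the total cost of the flow equals exactly the vertical edge length of the corresponding drawing. Since a minimum cost flow minimizes this quantity, a minimum cost flow yields a drawing of minimum vertical edge length, which is precisely the vertical compaction problem. The heart of the argument is to set up the correspondence between flows and drawings as a bijection (up to the obvious degrees of freedom) that is cost-preserving.

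\textbf{First} I would make the flow-to-drawing direction precise. Given a feasible flow $x$, I would assign to each vertical edge $e$ the length $x(a_e)$. I must check that these lengths are consistent, i.e. that they can be realized by an actual planar drawing with the prescribed $x$-coordinates. This is exactly what the flow conservation constraint buys us: since $\widetilde{H}_\Gamma$ has only rectangular faces, each face is bounded by some top and bottom horizontal segments and left and right vertical segments, and flow conservation at the node $n_f$ states that the sum of lengths of vertical edges on one side of $f$ equals the sum on the other side. This guarantees that every rectangular face closes up consistently, so the assigned lengths integrate to well-defined $y$-coordinates. The capacity constraint $l(a_e)=1$ ensures every edge has length at least one, so no two distinct vertices or bends are forced to coincide and the drawing remains a proper grid drawing.

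\textbf{Conversely}, I would argue that any valid vertical compaction $\Gamma'$ realizing $H_\Gamma$ with the given $x$-coordinates induces a feasible flow: set $x(a_e)$ to be the vertical length of edge $e$ in $\Gamma'$. Each edge has length at least one (grid drawing), so the capacity constraints hold, and because every face of the rectangularized drawing is a closed rectangle, the opposing vertical sides have equal total length, which is precisely flow conservation with demand zero at each face node. The two directions together give the bijection. For the cost, I would observe that dummy (visibility/dissecting) edges are assigned cost zero while genuine edges have cost one, so $c_x = \sum_e x(a_e) c(a_e)$ counts exactly the total length of the non-dummy vertical segments, i.e. the vertical edge length of $\Gamma'$; the dummy edges contribute length in the drawing but not to the objective, which is correct since they are removed at the end.

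\textbf{The main obstacle} I expect is the careful bookkeeping in the flow-conservation-to-consistency step, specifically verifying that local rectangularity of every face really does integrate to globally consistent $y$-coordinates across the whole (planar) drawing, rather than merely face by face. This is essentially a statement that the cycle space of the face-adjacency structure is spanned by face boundaries, so that local consistency implies global consistency; planarity and the fact that the normalized, dissected representation $\widetilde{H}_\Gamma$ has all rectangular faces are what make this work, mirroring the classical coordinate-assignment argument of Di Battista et al.~\cite{DBLP:books/ph/BattistaETT99}. Once that consistency is in hand, the cost-preserving bijection makes optimality immediate, and I would close by noting that removing the visibility edges and dummy vertices does not change any real edge length, so the minimum-cost drawing is a genuine optimal solution to the original vertical compaction instance.
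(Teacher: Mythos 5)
Your proposal is correct and takes essentially the same route as the paper: its (much terser) proof also rests on a cost-preserving correspondence between feasible flows in $N_y$ and drawings realizing the dissected representation, with the visibility edges guaranteeing that vertically separated elements cannot overlap, so the minimum cost flow yields a minimum vertical length assignment. Your write-up merely makes explicit both directions of that correspondence --- flow conservation on rectangular faces integrating to consistent $y$-coordinates, and any valid compaction inducing a feasible flow of equal cost --- which the paper compresses into the single phrase ``one-to-one correspondence'' between vertical segment length and flow on non-zero-cost arcs.
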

\begin{proof}
In the vertical compaction problem we have to preserve the vertical visibility properties of $\Gamma$ in order to avoid overlapping graph elements. This is achieved by the newly added visibility edges. Because of the one-to-one correspondence of vertical length of an edge segment in the resulting drawing $\Gamma^\prime$ and the amount of flow carried by a non-zero-cost arc $a_e \in N_y$, the result of the minimum cost flow gives us a minimal vertical length assignment.\qed
\end{proof}

\section{The Fled-Five Compaction Approach}
\label{algorithm}

In this section we study the following relaxation of the one-dimensional compaction problem called the 
\emph{monotone flexible edge compaction problem with fixed vertex star geometry} (\emph{Fled-Five compaction problem}). For vertical compaction it is defined as follows:
Given a planar grid drawing $\Gamma$ of a 4-graph $G$ with an orthogonal representation $H_\Gamma$, compute another planar grid drawing of $G$ that minimizes the vertical edge length subject to fixed $x$-coordinates in which all horizontal edge segments of $\Gamma$ are drawn $x$-monotonically the vertex star geometry for all vertices as well as the planar embedding is maintained. 

In contrast to the classical compaction problem, it is not required here to realize the entire orthogonal representation, but only the local geometric surrounding of each vertex. The fixed $x$-coordinates and the $x$-monotonicity prohibits the lengthening of the total horizontal edge length (see Fig.~\ref{modifications}).
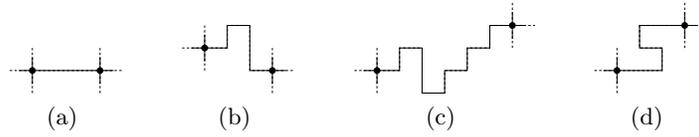
\begin{figure}[hbt]
\centering
\subfloat[]{
\resizebox{0.3\width}{!}{	
\begin{tikzpicture}
 	\node[v] at (1,0) {};
 	\node[v] at (4,0) {};
 	\draw[e] (0.4,0) -- (4.6,0);
 	\draw[e, dashed] (0,0) -- (5,0);
 	\draw[e, dashed] (1,-1) -- (1,1);
 	\draw[e, dashed] (4,-1) -- (4,1);
 	\draw[e] (1,-0.6) -- (1,0.6);
 	\draw[e] (4,-0.6) -- (4,0.6);
\end{tikzpicture} 	
}
}
\hspace{5mm}
\subfloat[]{
\resizebox{0.3\width}{!}{	
\begin{tikzpicture}
 	\node[v] at (1,0) {};
 	\node[v] at (4,-1) {};
 	\draw[e] (0.4,0) -| (2,1) -| (3,-1) --(4.6,-1);
 	\draw[e, dashed] (0,0) -| (2,1) -| (3,-1) --(5,-1);
 	\draw[e, dashed] (1,-1) -- (1,1);
 	\draw[e, dashed] (4,-2) -- (4,0);
 	\draw[e] (1,-0.6) -- (1,0.6);
 	\draw[e] (4,-1.6) -- (4,-0.4);
\end{tikzpicture} 	
}
}
\hspace{5mm}
\subfloat[]{
\resizebox{0.3\width}{!}{	
\begin{tikzpicture}
 	\node[v] at (1,0) {};
 	\node[v] at (7,2) {};
 	\draw[e] (0.4,0) -| (2,1) -| (3,-1) -|(4,0)-| (5,1)-| (6,2)--(7.6,2);
 	\draw[e, dashed] (0,0) -| (2,1) -| (3,-1) -|(4,0)-| (5,1)-| (6,2)--(8,2);
 	\draw[e, dashed] (1,-1) -- (1,1);
 	\draw[e, dashed] (7,1) -- (7,3);
 	\draw[e] (1,-0.6) -- (1,0.6);
 	\draw[e] (7,1.4) -- (7,2.6);
\end{tikzpicture} 	
}
}
\hspace{5mm}
\subfloat[]{
\resizebox{0.3\width}{!}{	
\begin{tikzpicture}
 	\node[v] at (1,0) {};
 	\node[v] at (4,2) {};
 	\draw[e] (0.4,0) -| (3,1) -| (2,2)-- (4.6,2);
 	\draw[e, dashed] (0,0) -| (3,1) -| (2,2)-- (5,2);
 	\draw[e, dashed] (1,-1) -- (1,1);
 	\draw[e, dashed] (4,1) -- (4,3);
 	\draw[e] (1,-0.6) -- (1,0.6);
 	\draw[e] (4,1.4) -- (4,2.6);
\end{tikzpicture} 	
}
}
\caption{Modifications in Fled-Five. (a) The original edge of length three. (b) Modification allowed in Fled-Five. (c) not allowed because of changed $x$-coordinate and (d) not allowed, because $x$-monotonicity is violated.}
\label{modifications}
\end{figure}

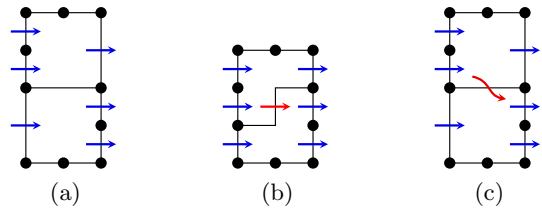
\begin{figure}[thb]
\begin{center}
\subfloat[]{
\resizebox{0.5\width}{!}{	
 		\begin{tikzpicture}
		\node[v] at (0,0) (a) {};
		\node[v] at (1,0) (b) {};
		\node[v] at (2,0) (c) {};
		\node[v] at (2,1) (d) {};
		\node[v] at (2,2) (e) {};
		\node[v] at (2,4) (f) {};
		\node[v] at (0,4) (g) {};
		\node[v] at (0,3) (h) {};
		\node[v] at (0,2) (i) {};
		\node[v] at (1,4) {};
		
		\draw[e] (0,0) to (2,0) to (2,4) to (0,4) to (0,0);
		\draw[e] (0,2) to (2,2);
		
		\draw[a] (-0.4, 1) to (0.4, 1);
		\draw[a] (-0.4, 3.5) to  (0.4, 3.5);
		\draw[a] (-0.4, 2.5) to  (0.4, 2.5);
		\draw[a] (1.6, 3) to  (2.4, 3);
		\draw[a] (1.6, 1.5) to  (2.4, 1.5);
		\draw[a] (1.6, 0.5) to  (2.4, 0.5);
	\end{tikzpicture}
 	}
}
	\hspace{9mm}
	\subfloat[]{
\resizebox{0.5\width}{!}{	
	\begin{tikzpicture}
		\node[v] at (0,0) (a) {};
		\node[v] at (1,0) (b) {};
		\node[v] at (2,0) (c) {};
		\node[v] at (2,1) (d) {};
		\node[v] at (2,2) (e) {};
		\node[v] at (2,3) (f) {};
		\node[v] at (0,3) (g) {};
		\node[v] at (0,2) (h) {};
		\node[v] at (0,1) (i) {};
		\node[v] at (1,3) {};
		
		\draw[e] (0,0) to (2,0) to (2,3) to (0,3) to (0,0);
		\draw[e] (0,1) to (1,1) to (1,2) to (2,2);
		
		\draw[a] (-0.4, 0.5) to (0.4, 0.5);
		\draw[a] (-0.4, 1.5) to  (0.4, 1.5);
		\draw[a] (-0.4, 2.5) to  (0.4, 2.5);
		\draw[a] (1.6, 2.5) to  (2.4, 2.5);
		\draw[a] (1.6, 1.5) to  (2.4, 1.5);
		\draw[a] (1.6, 0.5) to  (2.4, 0.5);
		\draw[a, color= red] (0.6,1.5) to (1.4,1.5);
	\end{tikzpicture}
}
}
	\hspace{9mm}
\subfloat[]{
\resizebox{0.5\width}{!}{	
	\begin{tikzpicture}
		\node[v] at (0,0) (a) {};
		\node[v] at (1,0) (b) {};
		\node[v] at (2,0) (c) {};
		\node[v] at (2,1) (d) {};
		\node[v] at (2,2) (e) {};
		\node[v] at (2,4) (f) {};
		\node[v] at (0,4) (g) {};
		\node[v] at (0,3) (h) {};
		\node[v] at (0,2) (i) {};
		\node[v] at (1,4) {};
		
		\draw[e] (0,0) to (2,0) to (2,4) to (0,4) to (0,0);
		\draw[e] (0,2) to (2,2);
		
		\draw[a] (-0.4, 1) to (0.4, 1);
		\draw[a] (-0.4, 3.5) to  (0.4, 3.5);
		\draw[a] (-0.4, 2.5) to  (0.4, 2.5);
		\draw[a] (1.6, 3) to  (2.4, 3);
		\draw[a] (1.6, 1.5) to  (2.4, 1.5);
		\draw[a] (1.6, 0.5) to  (2.4, 0.5);
		\draw[a, color = red!90!black] (0.6,2.3) .. controls +(0.5,-0.1) and +(-0.5,0.1) .. (1.5,1.7);

	\end{tikzpicture}
	
}
}
\end{center}	
	\caption[]{A graph and the arcs of the corresponding flow network (a,b) of the traditional algorithm and (c) in our algorithm}
	\label{exampleGraph}
\end{figure}
We adapt the network flow approach described in Sect.~\ref{improvComp}. But now we are able to introduce or remove double bends of certain edges in order to improve the vertical edge length.
We illustrate the concept by the following example. Figure~\ref{exampleGraph}(a) shows an optimally compacted drawing with respect to its orthogonal representation. The blue arcs show the arcs of the vertical compaction network (compare Fig.~\ref{classicalFlow}). For better readability we omitted the network nodes belonging to the faces. In Fig.~\ref{exampleGraph}(b), the same graph is shown, but with a different orthogonal representation. The new double bend in the middle edge leads to a smaller drawing. 
In this drawing it is possible to send flow between the two internal faces, since they are separated by a vertical edge segment. In the corresponding flow network there is a new network arc (red) connecting the upper and the lower face. 
This leads to the key idea of our approach. Introducing arcs in the vertical flow network between horizontally separated face nodes enables us to shift flow between them and therefore exchange vertical edge length at the expense of a double bend (see Fig.~\ref{exampleGraph}(c)). We can also reverse this thought. If there is an unnecessary double bend we can get rid of it by not sending flow over the arc corresponding to its middle segment and thus removing the middle segment from the drawing.

But we have to be careful here. First, we are compacting in vertical direction, so we cannot change the $x$-coordinates. If an edge has a double bend, it needs to have a horizontal expansion of at least two. Thus we will not consider  edges of length one as possible candidates for getting  a double bend.
Second, adding a double bend to $e$ introduces two reflex corners, one in both adjacent faces of~$e$. Two double bends may even cause an edge overlap, see Fig.~\ref{modifiedFlow}(a). We need to ensure that the computed flow corresponds to a feasible drawing. Therefore we will treat each grid point along an edge that could be part of a double bend as a potential reflex corner, which we will eliminate by dissecting.
We will now describe the algorithm in detail.
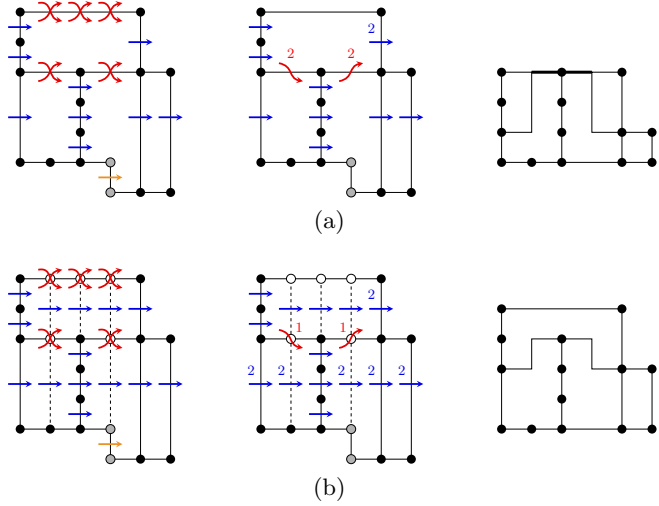
\begin{figure}[tbh]
\centering
\subfloat[]{
\resizebox{0.40\width}{!}{
	\begin{tikzpicture}

\begin{scope}[shift={(-8,0)}]
	\node[v] at (0,0) {};
	\node[v] at (0,3) {};
	\node[v] at (0,4) {};
	\node[v] at (0,5) {};
	\node[v] at (1,0) {};	
	\node[v] at (2,0) {};
	\node[v] at (2,1) {};
	\node[v] at (2,2) {};
	\node[v] at (2,3) {};
	\node[v] at (4,-1) {};
	\node[v] at (4,3) {};
	\node[v] at (4,5) {};
	\node[v] at (5,-1) {};
	\node[v] at (5,3) {};
	
	\draw[e] (0,0) -| (3,-1)-| (5,3) -| (0,5) -| (4,-1);
	\draw[e] (0,0) -- (0,3);
	\draw[e] (2,0) -- (2,3);
	
	\node[v, fill=black!30, draw] at (3,0){};
	\node[v, fill=black!30,draw] at (3,-1){};
	
	\draw[a] (-0.4,1.5) -- (0.4,1.5);
	\draw[a] (-0.4,3.5) -- (0.4,3.5);
	\draw[a] (-0.4,4.5) -- (0.4,4.5);
	
	\draw[a] (1.6,0.5) -- (2.4,0.5);
	\draw[a] (1.6,1.5) -- (2.4,1.5);
	\draw[a] (1.6,2.5) -- (2.4,2.5);
	
	\draw[a] (3.6,1.5) -- (4.4,1.5);
	\draw[a] (3.6,4) -- (4.4,4);
	\draw[a] (4.6,1.5) -- (5.4,1.5);
	
	\draw[a, color=yellow!50!red!95!black] (2.6,-0.5) -- (3.4,-0.5);
	
	\draw[a, color = red!90!black] (0.6,3.3) .. controls +(0.5,0.1) and +(-0.5,0.1) .. (1.4,2.7);
	\draw[a, color=red!90!black] (0.6, 2.7) .. controls +(0.5,0.1) and +(-0.5,-0.1) .. (1.4,3.3);
	
	\draw[a, color = red!90!black] (2.6,3.3) .. controls +(0.5,0.1) and +(-0.5,0.1) .. (3.4,2.7);
	\draw[a, color = red!90!black] (2.6,2.7) .. controls +(0.5,0.1) and +(-0.5,-0.1) .. (3.4,3.3);
	
	\draw[a, color = red!90!black] (0.6,5.3) .. controls +(0.5,0.1) and +(-0.5,0.1) .. (1.4,4.7);
	\draw[a, color=red!90!black] (0.6, 4.7) .. controls +(0.5,0.1) and +(-0.5,-0.1) .. (1.4,5.3);
	
\draw[a, color = red!90!black] (1.6,5.3) .. controls +(0.5,0.1) and +(-0.5,0.1) .. (2.4,4.7);
	\draw[a, color = red!90!black] (1.6,4.7) .. controls +(0.5,0.1) and +(-0.5,-0.1) .. (2.4,5.3);	
	
	\draw[a, color = red!90!black] (2.6,5.3) .. controls +(0.5,0.1) and +(-0.5,0.1) .. (3.4,4.7);
	\draw[a, color = red!90!black] (2.6,4.7) .. controls +(0.5,0.1) and +(-0.5,-0.1) .. (3.4,5.3);
\end{scope}

	\node[v] at (0,0) {};
	\node[v] at (0,3) {};
	\node[v] at (0,4) {};
	\node[v] at (0,5) {};
	\node[v] at (1,0) {};	
	\node[v] at (2,0) {};
	\node[v] at (2,1) {};
	\node[v] at (2,2) {};
	\node[v] at (2,3) {};
	\node[v] at (4,-1) {};
	\node[v] at (4,3) {};
	\node[v] at (4,5) {};
	\node[v] at (5,-1) {};
	\node[v] at (5,3) {};
	
	\draw[e] (0,0) -|(3,-1)-| (5,3) -| (0,5) -| (4,-1);
	\draw[e] (0,0) -- (0,3);
	\draw[e] (2,0) -- (2,3);
	
	\node[v, fill=black!30, draw] at (3,0){};
	\node[v, fill=black!30,draw] at (3,-1){};
	
	\draw[a] (-0.4,1.5) -- (0.4,1.5);
	\draw[a] (-0.4,3.5) -- (0.4,3.5);
	\draw[a] (-0.4,4.5) -- (0.4,4.5);
	
	\draw[a] (1.6,0.5) -- (2.4,0.5);
	\draw[a] (1.6,1.5) -- (2.4,1.5);
	\draw[a] (1.6,2.5) -- (2.4,2.5);
	
	\draw[a] (3.6,1.5) -- (4.4,1.5);
	\draw[a] (3.6,4) -- node[above=3pt, pos=0.1, scale=1.5]{2}(4.4,4);
	\draw[a] (4.6,1.5) -- (5.4,1.5);
	
	\draw[a, color = red!90!black] (0.6,3.3) .. controls +(0.5,-0.1) and +(-0.5,0.1) .. node[above=8pt, pos=0.5, scale=1.5, color=red!100] { \textbf{$2$}}(1.4,2.7);
	\draw[a, color = red!90!black] (2.6,2.7) .. controls +(0.5,0.1) and +(-0.5,-0.1) .. node[above=8pt, pos=0.5, scale=1.5, color=red!100] { \textbf{$2$}}(3.4,3.3);
	
\begin{scope}[shift={(8,0)}]
	\node[v] at (0,0) {};
	\node[v] at (0,1) {};
	\node[v] at (0,2) {};
	\node[v] at (0,3) {};
	\node[v] at (1,0) {};	
	\node[v] at (2,0) {};
	\node[v] at (2,1) {};
	\node[v] at (2,2) {};
	\node[v] at (2,3) {};
	\node[v] at (4,0) {};
	\node[v] at (4,1) {};
	\node[v] at (4,3) {};
	\node[v] at (5,0) {};
	\node[v] at (5,1) {};
	
	\draw[e] (0,0) -| (5,1) -| (3,3) -| (1,1) -| (0,3) -| (4,0);
	\draw[e] (0,0) -- (0,1);
	\draw[e] (2,0) -- (2,3);
	\draw[e, line width=1.0mm] (1,3) -- (3,3);
	
\end{scope}
\end{tikzpicture}
}
 } 
 
  \subfloat[]{
\resizebox{0.40\width}{!}{
	\begin{tikzpicture}

\begin{scope}[shift={(-8,0)}]
	\node[v] at (0,0) {};
	\node[v] at (0,3) {};
	\node[v] at (0,4) {};
	\node[v] at (0,5) {};
	\node[v] at (1,0) {};	
	\node[v] at (2,0) {};
	\node[v] at (2,1) {};
	\node[v] at (2,2) {};
	\node[v] at (2,3) {};
	\node[v] at (4,-1) {};
	\node[v] at (4,3) {};
	\node[v] at (4,5) {};
	\node[v] at (5,-1) {};
	\node[v] at (5,3) {};
	
	\draw[e] (0,0) -|(3,-1)-| (5,3) -| (0,5) -| (4,-1);
	\draw[e] (0,0) -- (0,3);
	\draw[e] (2,0) -- (2,3);
	
	\draw[e, dashed] (1,0) -- (1,5);
	\draw[e, dashed] (3,0) -- (3,5);
	\draw[e, dashed] (2,3) -- (2,5);
	
	\node[v, fill=black!30, draw] at (3,0){};
	\node[v, fill=black!30,draw] at (3,-1){};

	\node[vwhite] at (1,3){};
	\node[vwhite] at (3,3){};
	\node[vwhite] at (1,5){};
	\node[vwhite] at (2,5){};
	\node[vwhite] at (3,5){};
	
	\draw[a] (-0.4,1.5) -- (0.4,1.5);
	\draw[a] (-0.4,3.5) -- (0.4,3.5);
	\draw[a] (-0.4,4.5) -- (0.4,4.5);
	
	\draw[a] (1.6,0.5) -- (2.4,0.5);
	\draw[a] (1.6,1.5) -- (2.4,1.5);
	\draw[a] (1.6,2.5) -- (2.4,2.5);
	
	\draw[a] (3.6,1.5) -- (4.4,1.5);
	\draw[a] (3.6,4) -- (4.4,4);
	\draw[a] (4.6,1.5) -- (5.4,1.5);
	
	\draw[a] (0.6, 1.5) -- (1.4,1.5);
	\draw[a] (2.6, 1.5) -- (3.4,1.5);
	
	\draw[a] (0.6, 4) -- (1.4,4);
	\draw[a] (1.6, 4) -- (2.4,4);
	\draw[a] (2.6, 4) -- (3.4,4);
	
	\draw[a, color=yellow!50!red!95!black] (2.6,-0.5) -- (3.4,-0.5);
	
	\draw[a, color = red!90!black] (0.6,3.3) .. controls +(0.5,0.1) and +(-0.5,0.1) .. (1.4,2.7);
	\draw[a, color=red!90!black] (0.6, 2.7) .. controls +(0.5,0.1) and +(-0.5,-0.1) .. (1.4,3.3);
	
	\draw[a, color = red!90!black] (2.6,3.3) .. controls +(0.5,0.1) and +(-0.5,0.1) .. (3.4,2.7);
	\draw[a, color = red!90!black] (2.6,2.7) .. controls +(0.5,0.1) and +(-0.5,-0.1) .. (3.4,3.3);
	
	\draw[a, color = red!90!black] (0.6,5.3) .. controls +(0.5,0.1) and +(-0.5,0.1) .. (1.4,4.7);
	\draw[a, color=red!90!black] (0.6, 4.7) .. controls +(0.5,0.1) and +(-0.5,-0.1) .. (1.4,5.3);
	
\draw[a, color = red!90!black] (1.6,5.3) .. controls +(0.5,0.1) and +(-0.5,0.1) .. (2.4,4.7);
	\draw[a, color = red!90!black] (1.6,4.7) .. controls +(0.5,0.1) and +(-0.5,-0.1) .. (2.4,5.3);	
	
	\draw[a, color = red!90!black] (2.6,5.3) .. controls +(0.5,0.1) and +(-0.5,0.1) .. (3.4,4.7);
	\draw[a, color = red!90!black] (2.6,4.7) .. controls +(0.5,0.1) and +(-0.5,-0.1) .. (3.4,5.3);
\end{scope}

	\node[v] at (0,0) {};
	\node[v] at (0,3) {};
	\node[v] at (0,4) {};
	\node[v] at (0,5) {};
	\node[v] at (1,0) {};	
	\node[v] at (2,0) {};
	\node[v] at (2,1) {};
	\node[v] at (2,2) {};
	\node[v] at (2,3) {};
	\node[v] at (4,-1) {};
	\node[v] at (4,3) {};
	\node[v] at (4,5) {};
	\node[v] at (5,-1) {};
	\node[v] at (5,3) {};

	\draw[e] (0,0) -|(3,-1)-| (5,3) -| (0,5) -| (4,-1);
	\draw[e] (0,0) -- (0,3);
	\draw[e] (2,0) -- (2,3);
	
	\draw[e, dashed] (1,0) -- (1,5);
	\draw[e, dashed] (3,0) -- (3,5);
	\draw[e, dashed] (2,3) -- (2,5);
	
	\node[vwhite] at (1,3){};
	\node[vwhite] at (3,3){};
	\node[vwhite] at (1,5){};
	\node[vwhite] at (2,5){};
	\node[vwhite] at (3,5){};
	
	\node[v, fill=black!30, draw] at (3,0){};
	\node[v, fill=black!30,draw] at (3,-1){};

	\draw[a] (-0.4,1.5) -- node[above=3pt, pos=0.1, scale=1.5]{2}(0.4,1.5);
	\draw[a] (-0.4,3.5) -- (0.4,3.5);
	\draw[a] (-0.4,4.5) -- (0.4,4.5);
	
	\draw[a] (0.6,1.5) -- node[above=3pt, pos=0.1, scale=1.5]{2}(1.4,1.5);
	\draw[a] (0.6,4) -- (1.4,4);
	
	\draw[a] (1.6,0.5) -- (2.4,0.5);
	\draw[a] (1.6,1.5) -- (2.4,1.5);
	\draw[a] (1.6,2.5) -- (2.4,2.5);
	\draw[a] (1.6,4) -- (2.4,4);
	
	\draw[a] (2.6,1.5) -- node[above=3pt, pos=0.1, scale=1.5]{2}(3.4,1.5);
	\draw[a] (2.6,4) -- (3.4,4);
	
	\draw[a] (3.6,1.5) -- node[above=3pt, pos=0.1, scale=1.5]{2}(4.4,1.5);
	\draw[a] (3.6,4) -- node[above=3pt, pos=0.1, scale=1.5]{2}(4.4,4);
	\draw[a] (4.6,1.5) -- node[above=3pt, pos=0.1, scale=1.5]{2}(5.4,1.5);
	
	\draw[a, color = red!90!black] (0.6,3.3) .. controls +(0.5,-0.1) and +(-0.5,0.1) .. node[above=8pt, pos=0.9, scale=1.5, color=red!100] { \textbf{$1$}}(1.4,2.7);
	\draw[a, color = red!90!black] (2.6,2.7) .. controls +(0.5,0.1) and +(-0.5,-0.1) .. node[above=8pt, pos=0.1, scale=1.5, color=red!100] { \textbf{$1$}}(3.4,3.3);
	
\begin{scope}[shift={(8,0)}]
	\node[v] at (0,0) {};
	\node[v] at (0,2) {};
	\node[v] at (0,3) {};
	\node[v] at (0,4) {};
	\node[v] at (1,0) {};	
	\node[v] at (2,0) {};
	\node[v] at (2,1) {};
	\node[v] at (2,2) {};
	\node[v] at (2,3) {};
	\node[v] at (4,0) {};
	\node[v] at (4,2) {};
	\node[v] at (4,4) {};
	\node[v] at (5,0) {};
	\node[v] at (5,2) {};
	
	\draw[e] (0,0) -| (5,2) -| (3,3) -| (1,2) -| (0,4) -| (4,0);
	\draw[e] (0,0) -- (0,2);
	\draw[e] (2,0) -- (2,3);

\end{scope}
\end{tikzpicture}
}
 } 
\caption{Graph with modified flow network, a minimum cost flow (all unlabeled arcs carry one unit of flow) and the resulting drawing (a) without additional dissecting edges and (b) including them. The bold edge in (a) indicates an edge overlap and the grey vertices result from normalization, the white vertices in (b) are bend vertices.}
\label{modifiedFlow}
\end{figure}

Our algorithm works in three phases: dissection, computation of a minimum cost flow and transforming the flow into a new drawing.
First we normalize $H_\Gamma$ to $\overline{H}_\Gamma$.
For dissection we split the horizontal edges of $\overline{\Gamma}$ by placing an artificial \emph{bend vertex} on every inner grid point of an horizontal edge. The bend vertices may later be transformed to double bends.
Then we dissect $\overline{\Gamma}$ as described in Sect.~\ref{Notation} and treat every bend vertex as a reflex corner in its adjacent faces (see \ref{modifiedFlow}(b)). That means, we may dissect the drawing in vertical stripes of unit length. Doing so, we solve both of the problems mentioned above. Since we consider only edges of length at least two, we guarantee that the edge will be long enough for a double bend and by inserting the visibility edges we keep the vertical separation of graph elements intact. This results in the extensions $\widetilde{\Gamma}$ and $\widetilde{H}_\Gamma$ and $\widetilde{G}$.
\begin{observation}
\label{AnzahlKnoten}
After this dissection method the number of vertices and edges in $\widetilde{G}$ is $\mathcal{O}(A^2)$, where $A = h_\Gamma \cdot w_\Gamma$ and $h_\Gamma$ ($w_\Gamma$) is the height (width) of $\Gamma$.
\end{observation}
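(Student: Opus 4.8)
The plan is to count the vertices and edges of $\widetilde{G}$ by tracking them through the three construction steps — normalization, insertion of bend vertices, and dissection by visibility edges — and to control every count by the size of the bounding box of $\Gamma$. Since $\Gamma$ fits into an $h_\Gamma \times w_\Gamma$ bounding box, it contains at most $(h_\Gamma+1)(w_\Gamma+1) = \mathcal{O}(h_\Gamma w_\Gamma) = \mathcal{O}(A)$ grid points and, analogously, $\mathcal{O}(A)$ unit grid edges. I will use these two quantities as master bounds, the key auxiliary fact being that, because $\Gamma$ is planar, no unit grid edge is covered by two distinct graph edges; hence the total edge length of $\Gamma$ is at most the number of grid edges, i.e. $\mathcal{O}(A)$.

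First I would bound the normalized drawing $\overline{\Gamma}$. Each bend of $H_\Gamma$ becomes a degree-two dummy vertex and sits on an interior grid point of an edge, so the number of bends is at most the total edge length, which is $\mathcal{O}(A)$ by the fact above. Therefore $\overline{\Gamma}$ still has $\mathcal{O}(A)$ vertices and edges. Next I would account for the bend vertices: one is placed on each interior grid point of a horizontal edge, and the number of such points is bounded by the total horizontal edge length, again $\mathcal{O}(A)$. As each insertion merely subdivides an existing edge, the graph still has $\mathcal{O}(A)$ vertices and edges after this step.

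The step I expect to be the main obstacle is the dissection, because a single vertical visibility edge may be long, and one might fear that it spawns many new vertices. The crucial point is that every visibility edge is inserted for exactly one reflex corner — either an original reflex corner or a bend vertex — joins it to the first visible object, and thus introduces at most one new dummy vertex (at its far endpoint) together with one new edge; moreover the visibility edges are vertical, pairwise non-crossing, and remain inside the bounding box. Consequently their number is bounded by the number of reflex corners, which is $\mathcal{O}(A)$, so the dissection contributes only $\mathcal{O}(A)$ further vertices and edges.

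Summing the three contributions yields $\mathcal{O}(A)$ vertices and edges for $\widetilde{G}$; since $A \ge 1$ this is in particular $\mathcal{O}(A^2)$, which is the claimed bound. The weaker exponent in the statement is obtained already by a cruder estimate that does not exploit the one-dummy-vertex-per-visibility-edge argument: bounding each of the $\mathcal{O}(A)$ visibility edges by its length $\mathcal{O}(\max(h_\Gamma,w_\Gamma)) = \mathcal{O}(A)$ gives $\mathcal{O}(A)\cdot\mathcal{O}(A) = \mathcal{O}(A^2)$ directly, so either line of reasoning establishes the observation.
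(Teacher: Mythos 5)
Your proof is correct, and in fact the paper offers no argument at all for this statement---it is presented as a bare observation---so there is no written proof to match yours against; what you give is a legitimate justification. Moreover, your main line of reasoning proves the stronger \emph{linear} bound: every construction step (normalization, bend vertices, dissection) adds only a constant number of objects per grid point or per reflex corner, so $\widetilde{G}$ has $\mathcal{O}(A)$ vertices and edges. An even shorter route to the same conclusion is that all vertices of $\widetilde{G}$ occupy pairwise distinct grid points inside the bounding box, so there are at most $(h_\Gamma+1)(w_\Gamma+1)=\mathcal{O}(A)$ of them, and planarity of $\widetilde{G}$ bounds the number of edges by the same order. The cruder estimate you sketch at the end ($\mathcal{O}(A)$ visibility edges, each of length $\mathcal{O}(\max(h_\Gamma,w_\Gamma))$) is presumably what the authors had in mind when writing $\mathcal{O}(A^2)$, but it overcounts: interior grid points of a visibility edge do not become vertices. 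The distinction is not pedantic. In the proof of Theorem~\ref{mainTheorem} the paper deduces from this observation that the network has $\mathcal{O}(\bar n^2)$ nodes, using $h_\Gamma, w_\Gamma \leq \bar n$; that deduction follows from the linear bound $\mathcal{O}(A)\subseteq\mathcal{O}(\bar n^2)$, whereas the literal statement $\mathcal{O}(A^2)$ would only give $\mathcal{O}(\bar n^4)$ nodes and hence a worse running time. So your sharper analysis is the one the paper actually needs. Two immaterial slips in your accounting: each bend vertex is treated as a reflex corner in \emph{both} adjacent faces and hence spawns two visibility edges, not one; and a visibility edge whose far endpoint lands in the interior of an existing edge creates one dummy vertex but two additional edges, since the hit edge is split. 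Both are constant-factor corrections and do not affect the conclusion.
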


Next we construct the network from Sect.~\ref{improvComp}. Additionally to the already introduced arcs (blue arcs in Fig.~\ref{modifiedFlow}) we add two arcs $a^{\uparrow}_v$ and $a^\downarrow_v$ for every bend vertex (red curved arcs in Fig.~\ref{modifiedFlow}). Notice that every such bend vertex $v$ has four adjacent faces, one at the lower and upper left and right. If it lies on the external face,  two of the adjacent faces may coincide. Arc $a^{\uparrow}_v$ goes from the lower left face of $v$ to its upper right face and $a^{\downarrow}_v$ goes from the upper left to the lower right face. Flow on one of these arcs will lead to a double bend in the edge segment of $G$ that is split by $v$. The lower bound of these arcs is zero, the upper bound is set to infinity and the cost is one. For every edge $e \in \widetilde{G}$ that is a middle segment in $G$ we decrease the lower bound of its arc $a_e$ to zero, since this is a vertical edge segment, that we may delete (orange arc in Fig.~\ref{modifiedFlow}).

After computing a minimum cost flow in this network, we interpret the result in the following way: For every vertical edge $e$ of $\widetilde{G}$ we translate the amount of flow on the arc $a_e$ of $N_y$ to the length of $e$. Let $a^\uparrow_v$ be an arc corresponding to a bend vertex $v$ carrying $k$ units of flow. Let $e$ be the split horizontal edge and $x_v$ the $x$-coordinate of $v$. Then $e$ will start at its left endpoint in horizontal direction, bend downwards at $x$-coordinate $x_v$, proceed for $k$ units in $y$-direction and then continue to the right to its other endpoint. If we deal with an arc of the form $a^\downarrow_v$ the corresponding edge will do an upward bend at $x_v$. Let $e^\prime \in \widetilde{G}$ be an edge that corresponds to a middle segment and let $a_{e^\prime}$ be the corresponding arc. If $a_{e^\prime}$ carries no flow we do not assign any vertical length to $e^\prime$, hence the double bend to which $e^\prime$ belongs collapses. Notice that flow on 
every non-zero-cost arc corresponds to vertical edge length. Finally, we remove all dummy edges and vertices. See Fig.~\ref{modifiedFlow} for an example.

Let us assume that the width and height of $\Gamma$ are bounded by the number of its vertices and bends. Otherwise there would be a grid line without any vertex or bend on it. We can delete such grid lines iteratively until we reach our bound.

\begin{theorem}
\label{mainTheorem}
Let $\Gamma$ be a planar grid drawing of a 4-graph $G$ with an orthogonal representation $H_\Gamma$. Let $\bar n$ be the number of vertices and bends of $\Gamma$. Then the above described extended network-based compaction algorithm takes $\mathcal{O} (\bar n^{3}\log{}\bar n)$ time and solves the vertical Fled-Five compaction problem to optimality.
\end{theorem}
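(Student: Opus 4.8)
The plan is to split the statement into a correctness part and a running-time part, the former being the substantial one. For correctness I would set up a cost-preserving correspondence between feasible flows in $N_y$ and feasible Fled-Five drawings of $G$, extending the argument of Lemma~\ref{classicalFlowOptimalFor1DIM} to the new bend-vertex arcs and to the three Fled-Five side conditions (fixed $x$-coordinates, $x$-monotone horizontal segments, fixed vertex star geometry). For the \emph{soundness} direction I start from a feasible (integral) flow $x$ and build a drawing $\Gamma'$: the flow on each vertical arc becomes the length of the corresponding segment, and $k$ units of flow on a bend-vertex arc $a^\uparrow_v$ (resp.\ $a^\downarrow_v$) become a double bend whose middle segment has length $k$, stepping the edge down (resp.\ up) at the column $x_v$. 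Integer $y$-coordinates are then obtained by integrating these lengths along any path through the faces; flow conservation at the face nodes makes this integration path-independent, so the coordinates are well defined and every rectangular face closes up, exactly as in Lemma~\ref{classicalFlowOptimalFor1DIM}. I would then check the Fled-Five conditions: the $x$-coordinates are never touched; $x$-monotonicity survives because a bend-vertex arc inserts a vertical step at a single column and never reverses the horizontal direction; and the star geometry is preserved because bend vertices lie on \emph{inner} grid points only, so the first unit segment at each vertex keeps its original direction (this is also why length-one edges carry no bend vertex). Planarity and the embedding are preserved as in the classical case: the dissecting edges placed at every bend vertex keep the vertical visibility relations intact and are precisely what prevents the edge overlaps of Fig.~\ref{modifiedFlow}(a). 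By construction the vertical edge length of $\Gamma'$ equals the flow over the non-zero-cost arcs, that is, the cost $c_x$.

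For the \emph{completeness} direction I would prove the converse inclusion: every feasible Fled-Five drawing $\Gamma'$ induces a feasible flow of cost equal to its vertical edge length. Since $\Gamma'$ keeps the fixed $x$-coordinates and draws all horizontal segments $x$-monotonically, every originally horizontal edge of $\Gamma$ becomes an $x$-monotone staircase whose vertical steps occur at interior integer columns, that is, at bend vertices, while the remaining vertical segments stay at their columns. Assigning to each vertical arc the length of its segment and to each bend-vertex arc the length of the middle segment of the step it represents (zero if that step is absent) gives a nonnegative function respecting all lower and upper bounds, and the consistency of $\Gamma'$ as a planar grid drawing forces flow conservation at every face node. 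The two directions together show that minimizing $c_x$ over feasible flows is the same optimization problem as minimizing the vertical edge length over feasible Fled-Five drawings, so a minimum cost flow in $N_y$ yields an optimal Fled-Five drawing.

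For the running time I would bound the instance handed to the flow solver. After deleting empty grid lines we may assume $h_\Gamma,w_\Gamma\in\mathcal{O}(\bar n)$. After dissection all faces are interior-disjoint unit-width rectangles inside the $h_\Gamma\times w_\Gamma$ grid region, so there are at most $\mathcal{O}(h_\Gamma w_\Gamma)=\mathcal{O}(\bar n^2)$ of them, and counting their corners and sides shows that $\widetilde{G}$, and hence $N_y$, has $\mathcal{O}(\bar n^2)$ nodes and arcs. Dissection and the back-translation of the flow into a drawing are linear in this size, so the dominant cost is the minimum cost flow, which by the bound quoted in Sect.~\ref{Notation} runs in $\mathcal{O}(|V_N|^{3/2}\log|V_N|)=\mathcal{O}((\bar n^2)^{3/2}\log\bar n^2)=\mathcal{O}(\bar n^3\log\bar n)$, as claimed.

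The step I expect to be the main obstacle is the interplay, inside the completeness argument, between expressiveness and planarity: one has to show that the fine unit-stripe dissection together with the bend-vertex arcs is rich enough to realize \emph{every} legal $x$-monotone shape, so that no optimal Fled-Five drawing is lost, while at the same time the unit lower bounds on the dissecting edges stay tight enough to rule out the overlaps of Fig.~\ref{modifiedFlow}(a). Making this two-sided claim precise, and in particular verifying that an arbitrary feasible flow never forces two middle segments to collide, is the delicate part; the coordinate-consistency and cost-accounting steps, by contrast, are routine extensions of the proof of Lemma~\ref{classicalFlowOptimalFor1DIM}.
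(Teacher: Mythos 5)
Your proposal follows the paper's own route: the same network construction, the same flow-to-drawing translation, the same cost accounting, and the same running-time analysis (Observation~\ref{AnzahlKnoten} giving a network of size $\mathcal{O}(\bar n^2)$, then the planar minimum-cost-flow bound of~\cite{DBLP:journals/jgaa/CornelsenK12} yielding $\mathcal{O}(\bar n^{3}\log\bar n)$). The difference is structural: you split optimality into a soundness direction and a completeness direction, whereas the paper argues only soundness (a feasible flow yields a feasible Fled-Five drawing whose vertical length equals the flow cost) and then concludes optimality in a single sentence. That conclusion silently uses exactly your completeness claim: that every feasible Fled-Five drawing --- in particular an optimal one --- respects the vertical visibility relations of $\Gamma$ encoded by the dissecting arcs, and hence induces a feasible flow of cost equal to its vertical edge length. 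So the step you single out as the main obstacle is not an artifact of your approach; it is precisely what the paper leaves implicit (as it also does in Lemma~\ref{classicalFlowOptimalFor1DIM}, where the necessity of the visibility constraints is asserted rather than derived). Your framing is therefore the more rigorous one; to finish it you would prove that in any planar drawing with the prescribed embedding, fixed $x$-coordinates, fixed vertex star geometry and $x$-monotone horizontal edges, any two graph points sharing a grid column appear in the same vertical order as in $\Gamma$ (a swap would force a crossing, by a connectivity and face-boundary argument), so every dissecting arc can carry its unit lower bound and flow conservation holds at each face node. One minor correction: the dissection step is not linear --- computing the visibility edges requires a sweep, $\mathcal{O}(\bar n^{2}\log\bar n)$ as in the paper --- but this is dominated by the flow computation, so your overall bound stands.
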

\begin{proof}
Because of the visibility edges we will maintain visibility properties of~$\Gamma$. Every vertical edge segment of~$G$ that is not a middle segment gets a minimum length of one due to the lower bound of the corresponding arc in $N_y$. Every bend vertex can savely be turned to a double bend if its corresponding arc carries flow, since we add visibility edges to the top and bottom of it. By this modification we do not change the vertex star geometry of $\Gamma$, because bend vertices are not part of $G$. Every middle segment can be removed if there is no flow on the corresponding arc. Because its edge $e \in \Gamma$ has a horizontal expansion of at least two, $e$ will still have a proper length and due to the dissection phase visibility properties are maintained. This modification also does not affect the vertex star geometry of $\Gamma$, since a middle segment is not adjacent to a vertex of $G$. So every face will have a rectangular shape, no matter what modification we apply, and due to flow conservation every face and thus the entire graph is drawn consistently. Similar to Lemma~\ref{improvComp} the length of vertical segments of $\Gamma^\prime$ is equal to the cost of the computed minimum cost flow. 
The horizontal edge segments maintain their $x$-monotonicity, because we only add vertical segments to the drawing. Hence the horizontal edge lengths and the $x$-coordinates of the vertices  in $\Gamma$ stay the same. Because the minimum cost flow gives us a minimal vertical length assignment we have an optimal solution for the vertical Fled-Five compaction problem. 

For the running time we first consider the dissection phase. By Observation~\ref{AnzahlKnoten} we end up with $\mathcal{O}(\bar n^2)$ vertices. For inserting dummy edges based on visibility properties we can use a sweep-line algorithm that runs in $\mathcal{O}(\bar n^2\log{}\bar n)$ time in our case.
The construction of the flow network runs in linear time in the size of~$\widetilde{G}$. 
The network is planar and has $\mathcal{O}(\bar n^2)$ nodes. 
For planar flow networks with $n$ nodes and $\mathcal{O}(n)$ arcs there exists a $\mathcal{O}(n^{3/2}\log{} n)$ time algorithm for computing a minimum cost flow~\cite{DBLP:journals/jgaa/CornelsenK12}.
Therefore we have a total running time of $\mathcal{O} (\bar n^{3}\log{}\bar n)$.
\qed
\end{proof}

{
The cubic part of the running time comes from the possibly quadratic number of bend vertices and dissecting edges. So if we restrict the number of bend vertices to be linear, we can decrease the running time to $\mathcal{O} (\bar n^{3/2}\log{}\bar n)$.

\paragraph{Controlling the Number of New Bends.}
Although the above compaction approach may reduce the total edge length, the number of bends in the resulting drawing may increase. A possible approach to control the number of new bends could be to bound the number of specific arcs used by a feasible flow. This is known as the binary case of the \emph{budget-constrained minimum cost flow problem} for which Holzhauser et al.~\cite{Holzhauser:2016:BMC:2911577.2911605} showed strong NP-completeness.
In fact, in this model we have no other control over the number of additional bends than restricting the number of bend vertices. Each such vertex can generate two new bends.

\section{Experimental Evaluation}
\label{test}

In our experimental evaluation we compare the new compaction approach \textsf{FF} with the flow-based method \textsf{TRAD} described in Sect.~\ref{Notation}. Both lead to a planar drawing with optimal total edge length for the according one-dimensional compaction problem. 

The algorithm was implemented in C++ using the OGDF library~\cite{OGDF}. We have computed orthogonal drawings in the traditional way using the topology-shape-metrics approach, applying a constructive compaction method in order to get a feasible planar grid drawing with a normalized orthogonal representation~$H$. We have taken these drawings as input for both approaches \textsf{FF} and \textsf{TRAD} that repeatedly apply horizontal and vertical compaction steps until no further improvement can be achieved. Recall that although both approaches are optimal for the one-dimensional compaction problems, repeated application of horizontal and vertical compaction steps does in general not lead to drawings with optimal total edge length for the two-dimensional compaction problems. Notice also that due to the alternating repetitions in \textsf{FF} the monotonicity of edge segments may no longer be maintained.

We state the following hypotheses regarding the results of our new approach \textsf{FF} compared to those of \textsf{TRAD}:
\begin{enumerate}[leftmargin=10mm,itemsep=2mm]
\item[\textbf{(H1)}] The total edge length and therefore the area of the drawings will decrease. We will also examine the change of the maximum edge length. It is hard to predict the behaviour, because on the one hand adding a double bend lengthens an edge, but it could also lead to shorter edges in other places.
\item[\textbf{(H2)}] The number of bends will rise significantly.
\item[\textbf{(H3)}] Although there will be many more dissecting edges, the running time will not increase drastically.
\end{enumerate}

\begin{figure}[tbh]
\subfloat[]{
	\begin{minipage}{0.48\textwidth}
	\includegraphics[scale=0.4]{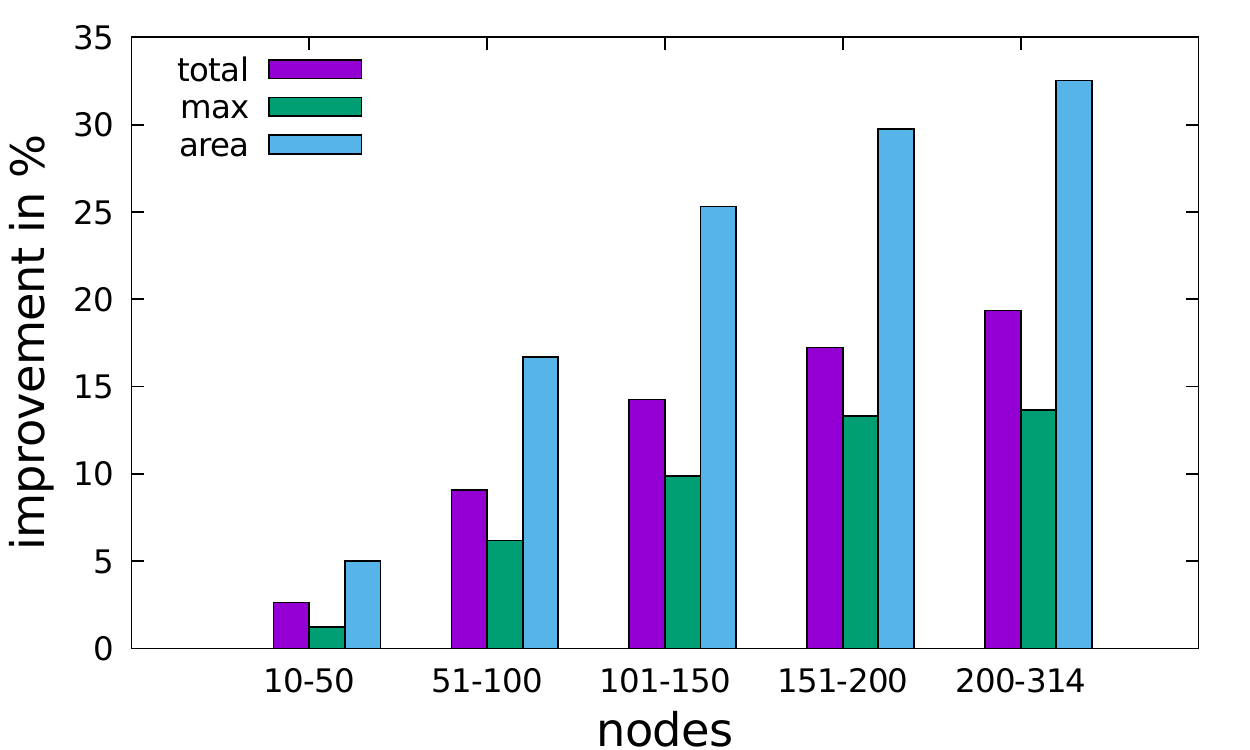}
	\end{minipage}
	\hfill
	\begin{minipage}{0.48\textwidth}
	\includegraphics[scale=0.4]{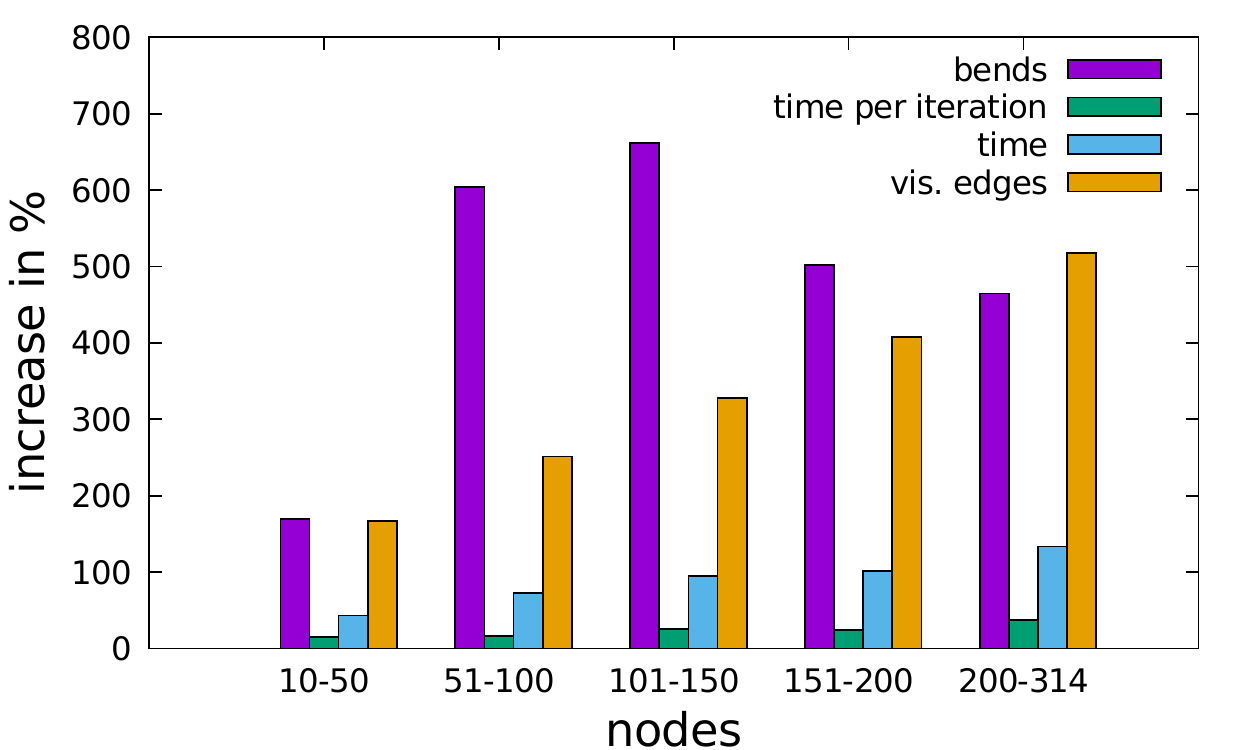}
	\end{minipage}
}

\subfloat[]{
	\begin{minipage}{0.48\textwidth}
	\includegraphics[scale=0.4]{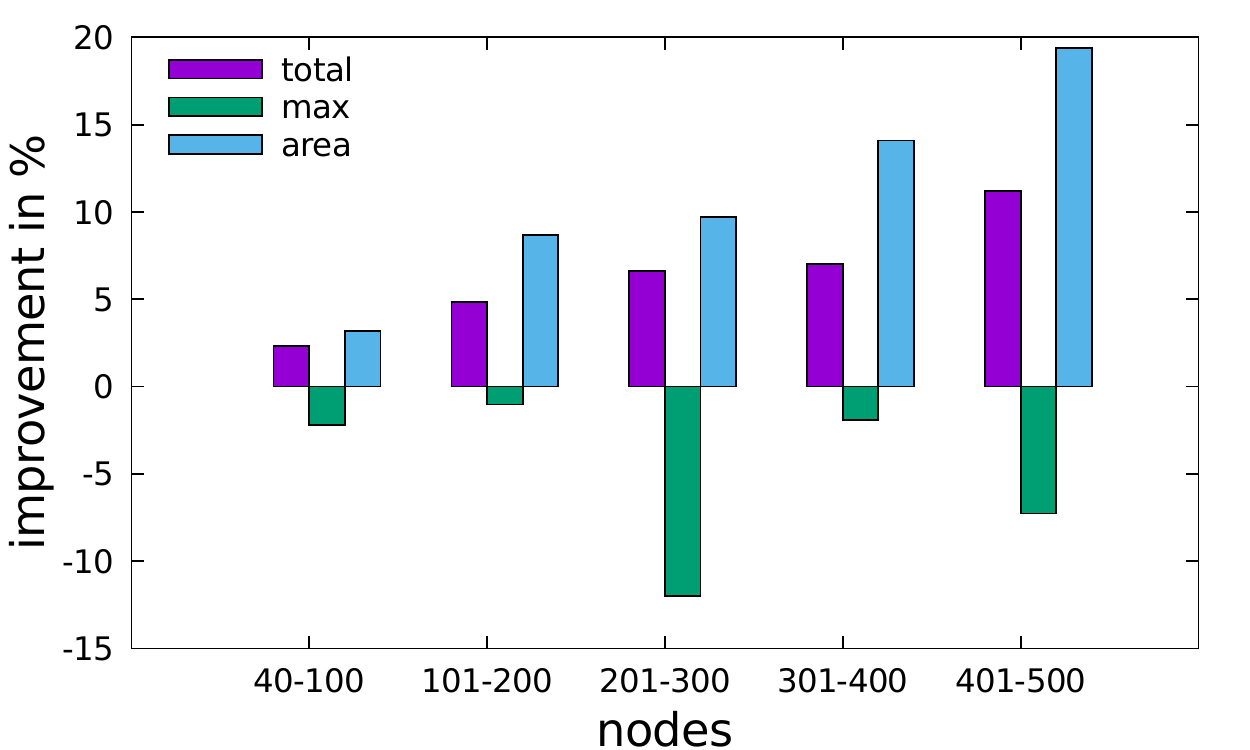}
	\end{minipage}
	\hfill
	\begin{minipage}{0.48\textwidth}
	\includegraphics[scale=0.4]{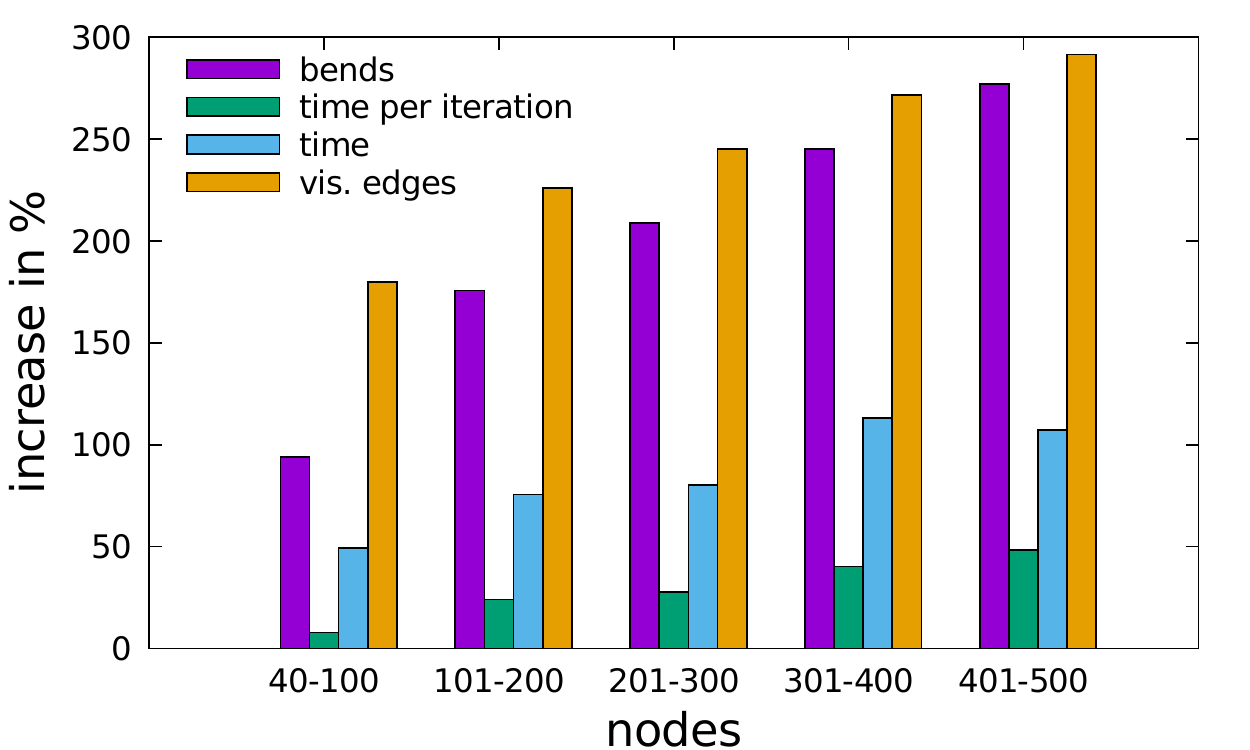}
	\end{minipage}
}

\subfloat[]{
	\begin{minipage}{0.48\textwidth}
	\includegraphics[scale=0.4]{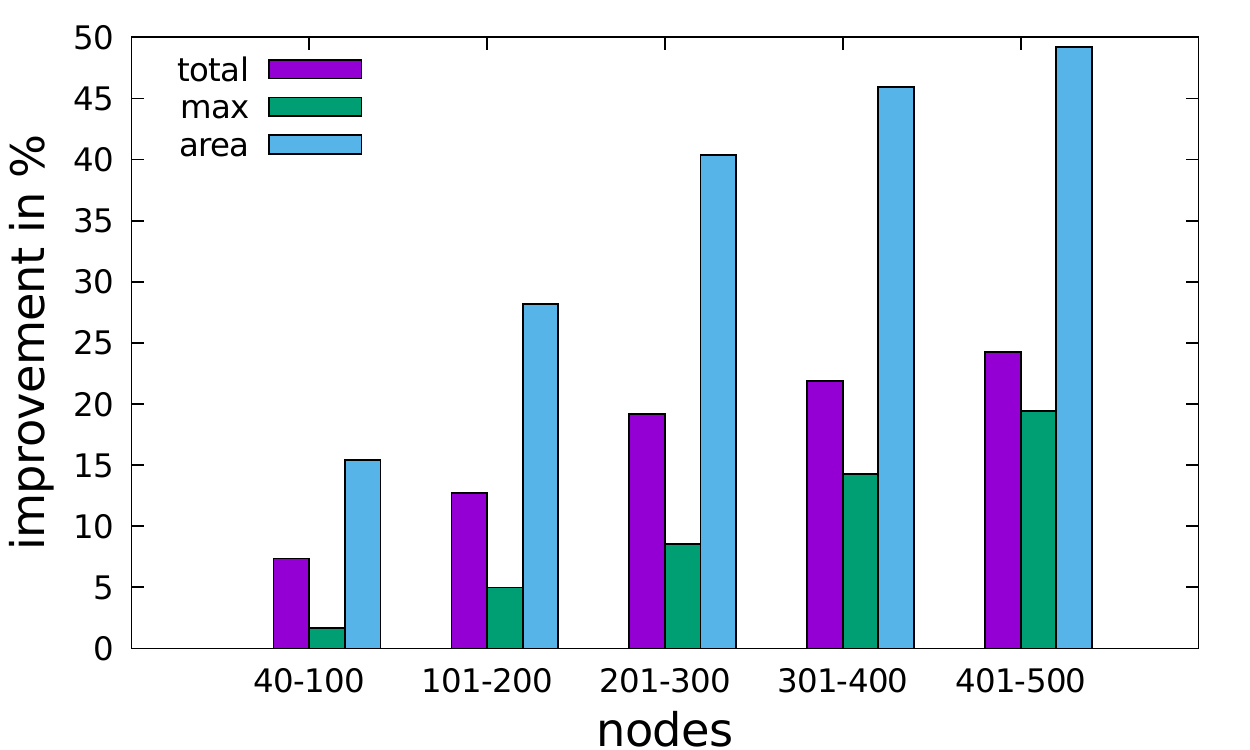}
	\end{minipage}
	\hfill
	\begin{minipage}{0.48\textwidth}
	\includegraphics[scale=0.4]{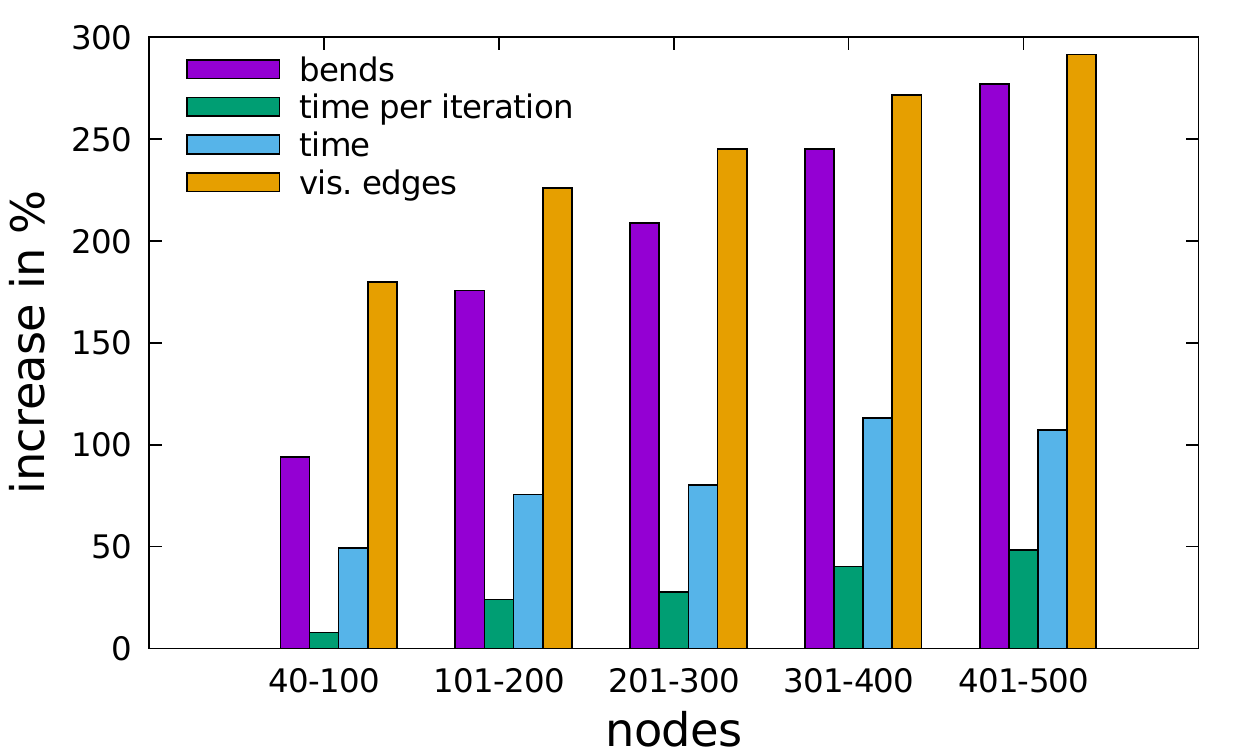}
	\end{minipage}
}

\caption[]{Average change of various criteria from \textsf{TRAD} to \textsf{FF} in percent for (a) the Rome graphs, (b) the quasi trees, and (c) the biconnected graphs}
\label{all_bars}
\end{figure}

We tested our algorithm on three test suites. The standard benchmark set called \emph{Rome graphs} introduced in~\cite{DIBATTISTA1997303} consists of about 11,000 real-world and real-world like graphs with 10 to 100 vertices. 
The second set of graphs are \emph{quasi-trees} which are known to be hard to compact. They have already been used in the compaction literature (e.g., \cite{Klau2001},\cite{DBLP:phd/de/Klau2004}). From this, we selected a subset of 155 graphs with 40 to 450 vertices.
The last set consists of 240 randomly generated biconnected planar 4-graphs with 40 to 500 vertices.
All used graphs were, if necessary, initially turned into planar $4$-graphs by planarizing them with methods of OGDF and replacing vertices with $k>4$ outgoing edges with faces of size $k$. This results in graph sizes of up to 314 vertices for the Rome test suite and up to 475 vertices for the quasi-trees before the orthogonalization step.
The input instances are available on \url{https://ls11-www.cs.tu-dortmund.de/mutzel/compaction}.
All tests were run on an Intel Xeon E5-2640v3 2.6GHz CPU with 128 GB RAM.

\begin{figure}[htb]
\centering
\begin{minipage}{0.48\textwidth}
	\includegraphics[scale=0.4]{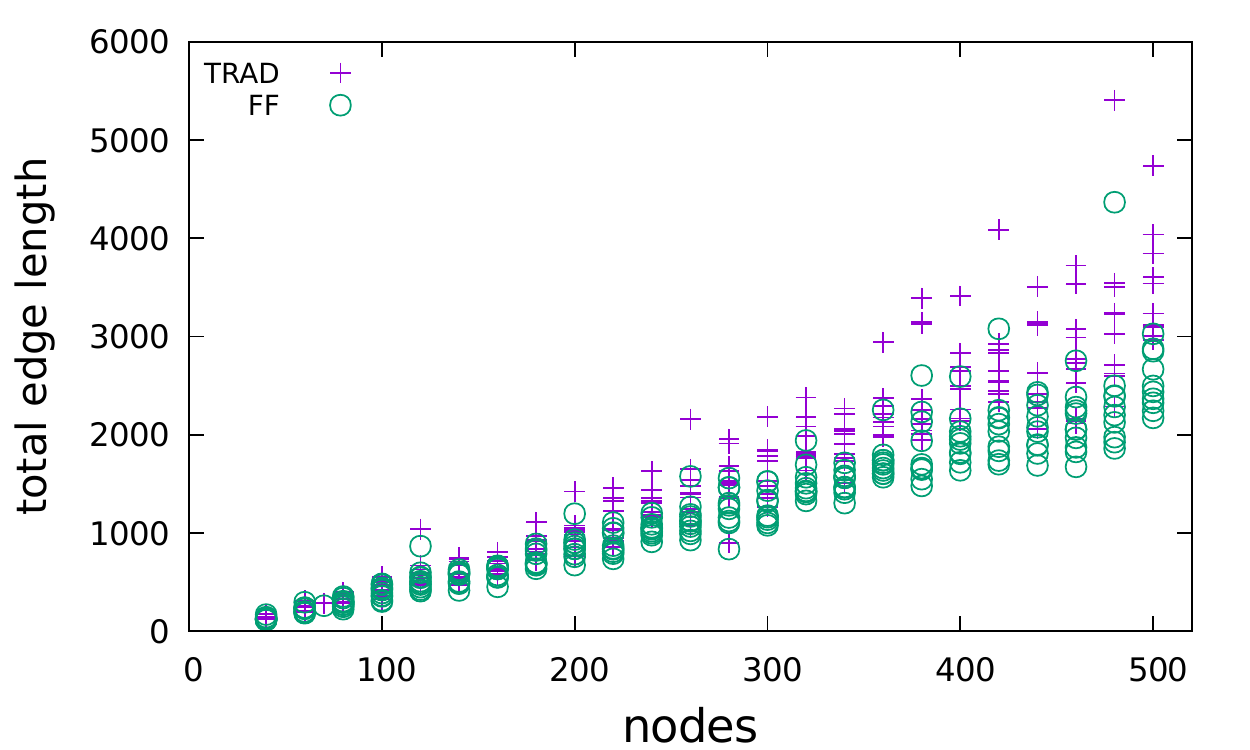}
\end{minipage}
\hfill
\begin{minipage}{0.48\textwidth}
	\includegraphics[scale=0.4]{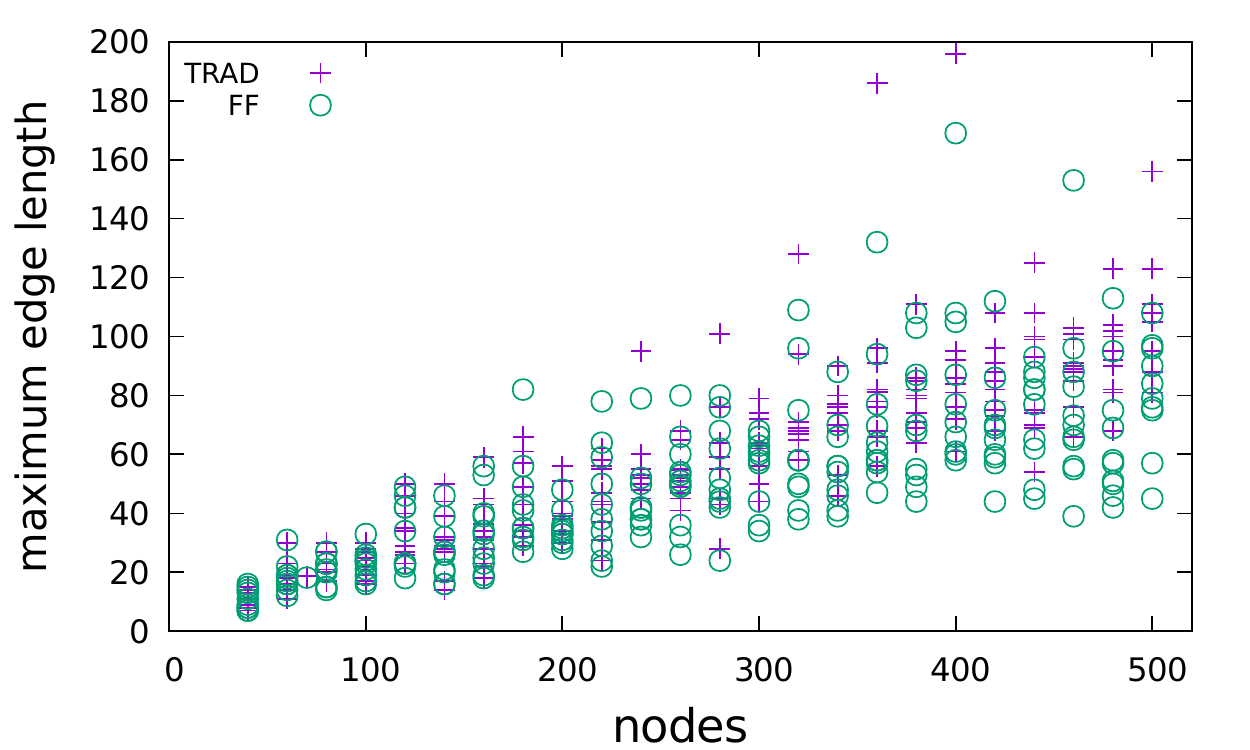}
\end{minipage}

\begin{minipage}{0.48\textwidth}
	\includegraphics[scale=0.4]{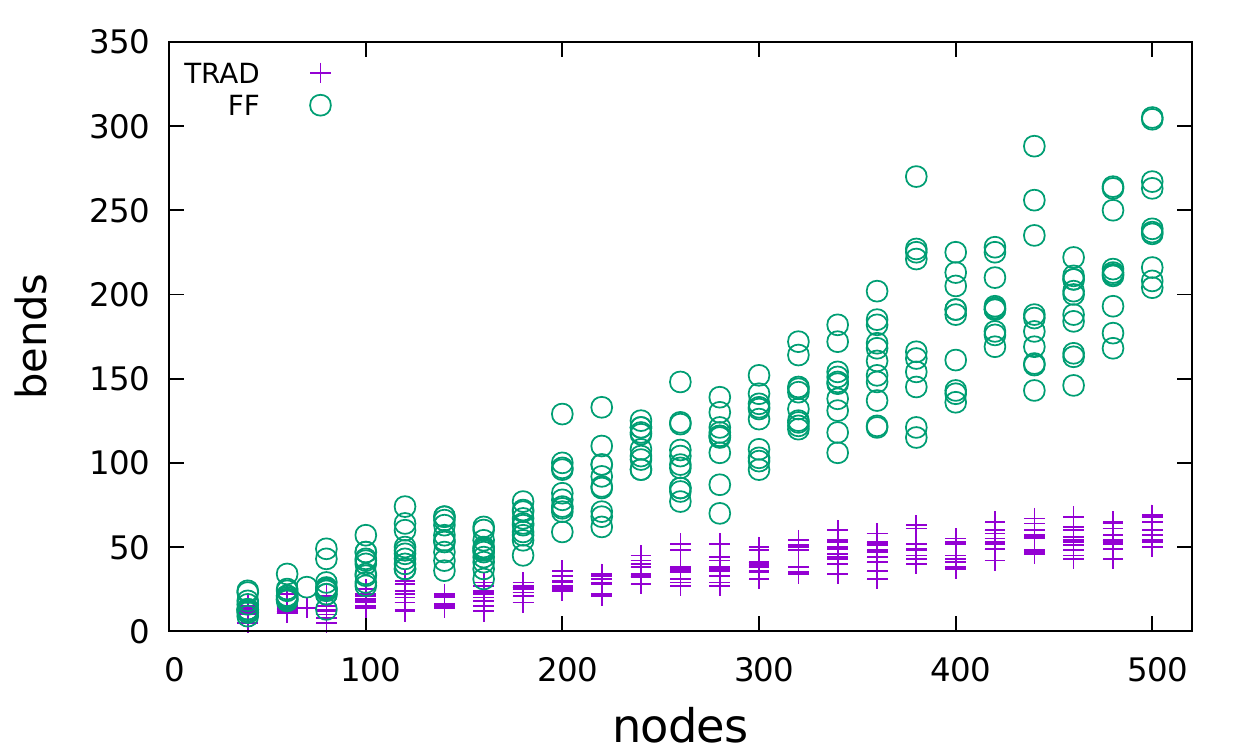}
\end{minipage}
\hfill
\begin{minipage}{0.48\textwidth}
	\includegraphics[scale=0.4]{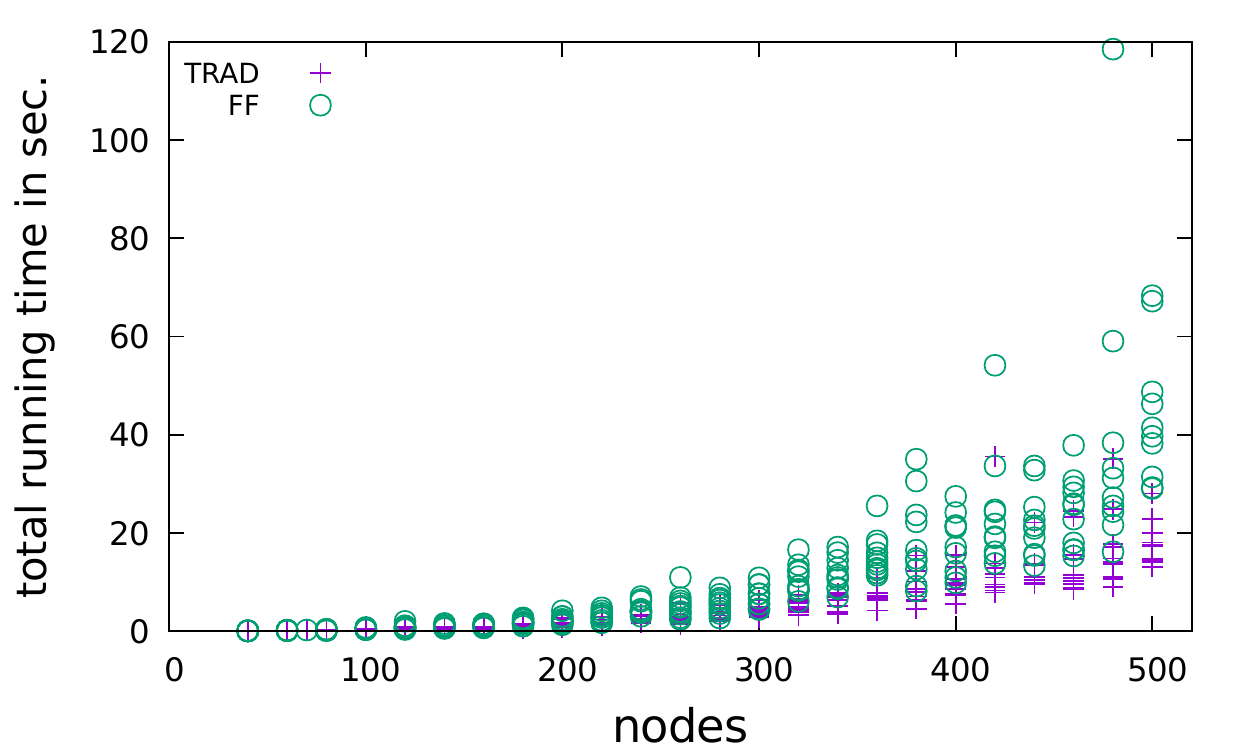}
\end{minipage}

\caption[]{Absolute results of \textsf{TRAD} and \textsf{FF} for biconnected graphs}
\label{bi_points}
\end{figure}

\noindent
\textbf{(H1)}
Figure~\ref{all_bars} shows on the left the average decrease of the total and maximum edge length as well as of the area in percent. 
In all three graph classes the total edge length as well as the area has improved. We were able to decrease the total edge length by up to 36.1\% and the area by up to 68.7\%. In general, larger graphs allow a larger improvement. 
It turned out that for the majority of the instances we were also able to reduce the maximum edge length. Only for the quasi-trees the longest edges produced by \textsf{FF} are longer on average. But in general the results are mixed, reaching from a lengthening by 87.5\% to a shortening by 58.1\%.
Figure~\ref{bi_points} shows absolute values for the total edge length and the maximum edge length measured for the biconnected graphs. (The plots for the other graph classes can be found in the Appendix.)

\medskip
\noindent
\textbf{(H2)}
The right side of Fig.~\ref{all_bars} displays the average increase of the number of bends. If an instance had no bends after \textsf{TRAD}, we use the number of bends after \textsf{FF} as relative increase to take also these instances into account for computing the average increase of bends.
As expected the drawings produced by \textsf{FF} have many more bends, especially for the Rome graphs. In one of the worst cases the number of bends went up from zero to 72. 
Figure~\ref{bi_points} shows the absolute values for the number of bends  
measured for the biconnected graphs. 
Figure~\ref{bends-area-total} gives the relation between the number of invested bends and the improvement in terms of area and total edge length for the biconneced graphs. 
The data for the other two graph classes look similar.
\begin{figure}[bth]
\subfloat[]{
\centering
\begin{minipage}{0.45\textwidth}
\includegraphics[scale=0.4]{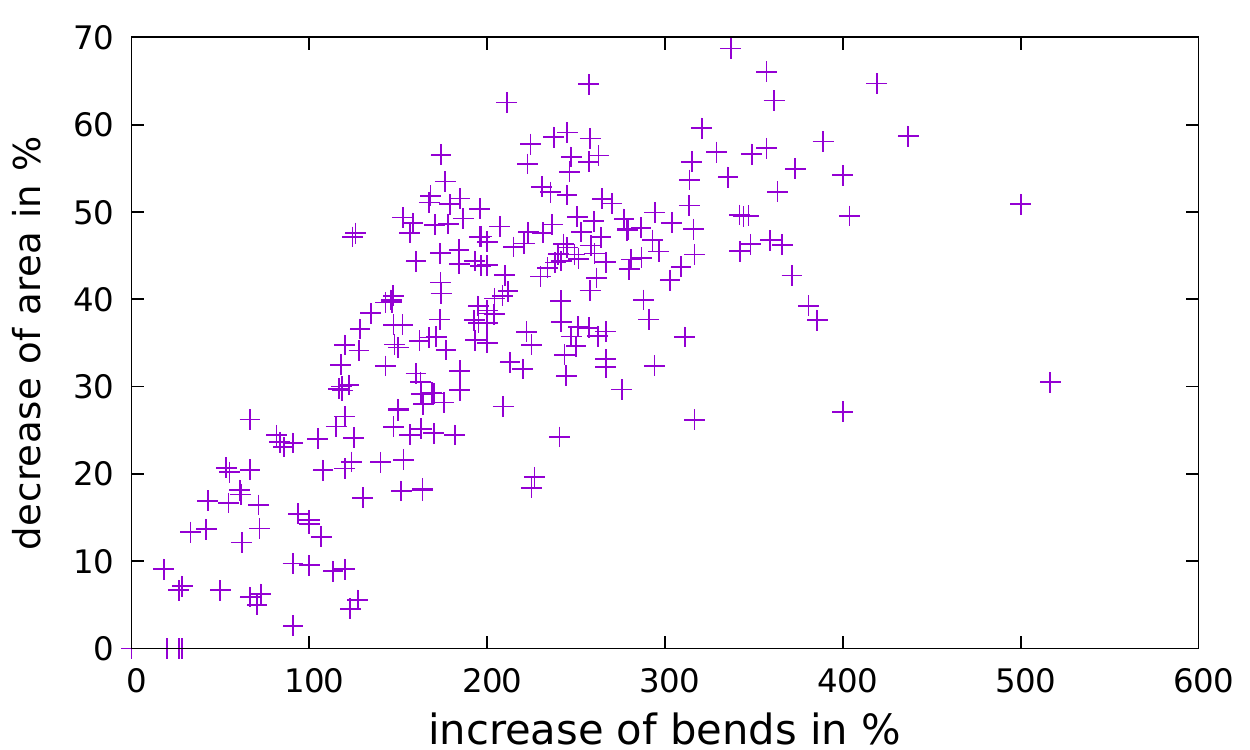}
\end{minipage}
}
\hfill
\subfloat[]{
\centering
\begin{minipage}{0.45\textwidth}
\includegraphics[scale=0.4]{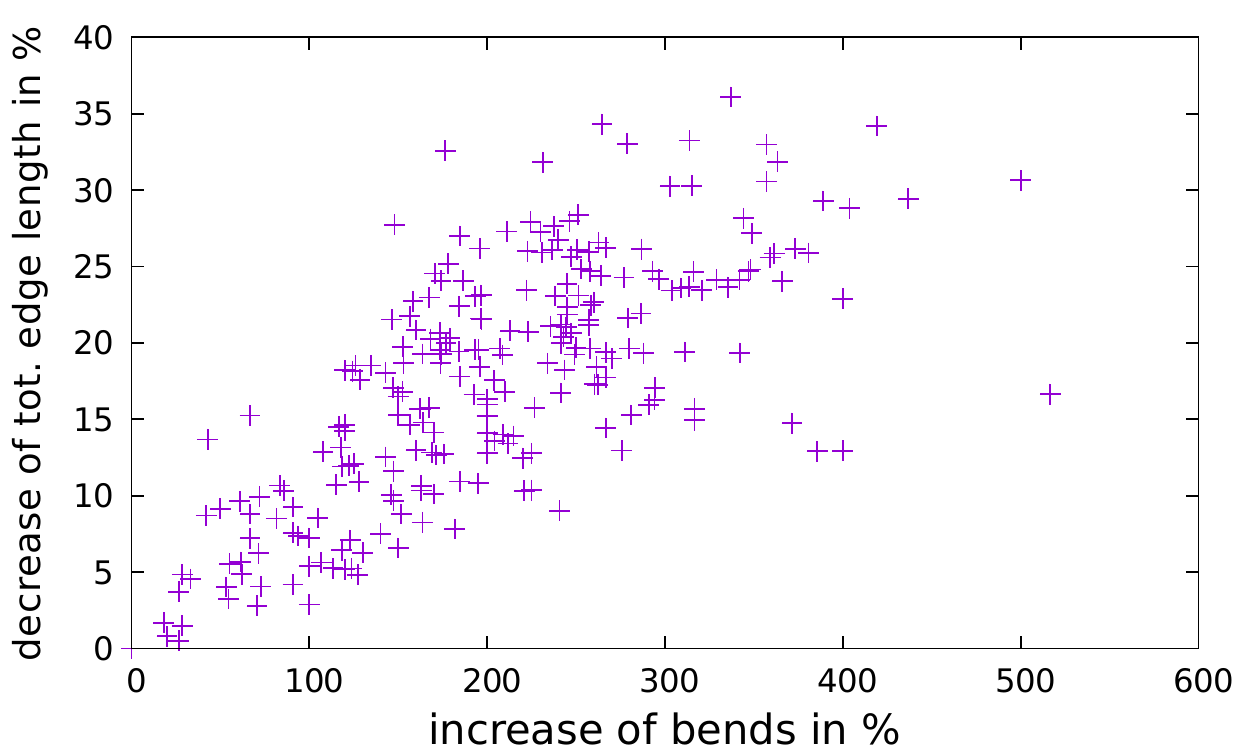}
\end{minipage}
}
\caption{Relation between additional bends and (a) improvement of area and (b) total edge length for the biconnected graphs}
\label{bends-area-total}
\end{figure}

\medskip
\noindent
\textbf{(H3)}
We measured the total running time and the number of performed compaction iterations. For better comparison we also display the running time per iteration, because \textsf{FF} tends to do more iterations. The reason for that is that in one compaction step of \textsf{FF} changes the drawing and therefore the flow network of the next step more significantly, leading to more new possibilities in compacting.
The right side of Fig.~\ref{all_bars} displays the average increase of the running time and the number of visibility edges. 
In all cases the number of additional dissecting edges has increased significantly. The running time per iteration has also increased, but not more than 50\% on average. Only for 252 instances of the Rome graphs and one biconnected graph the running time per iteration has more than doubled. 

\medskip
The results support the hypotheses. Our new approach is able to improve the drawing area and the total edge length by introducing additional bends.
Figure~\ref{exampledrawings} shows two examples drawn with \textsf{TRAD} and \textsf{FF}, respectively.

\begin{figure}[h]
\subfloat[][]{
\begin{minipage}{0.3\textwidth}
	\centering
	\includegraphics[scale=0.04]{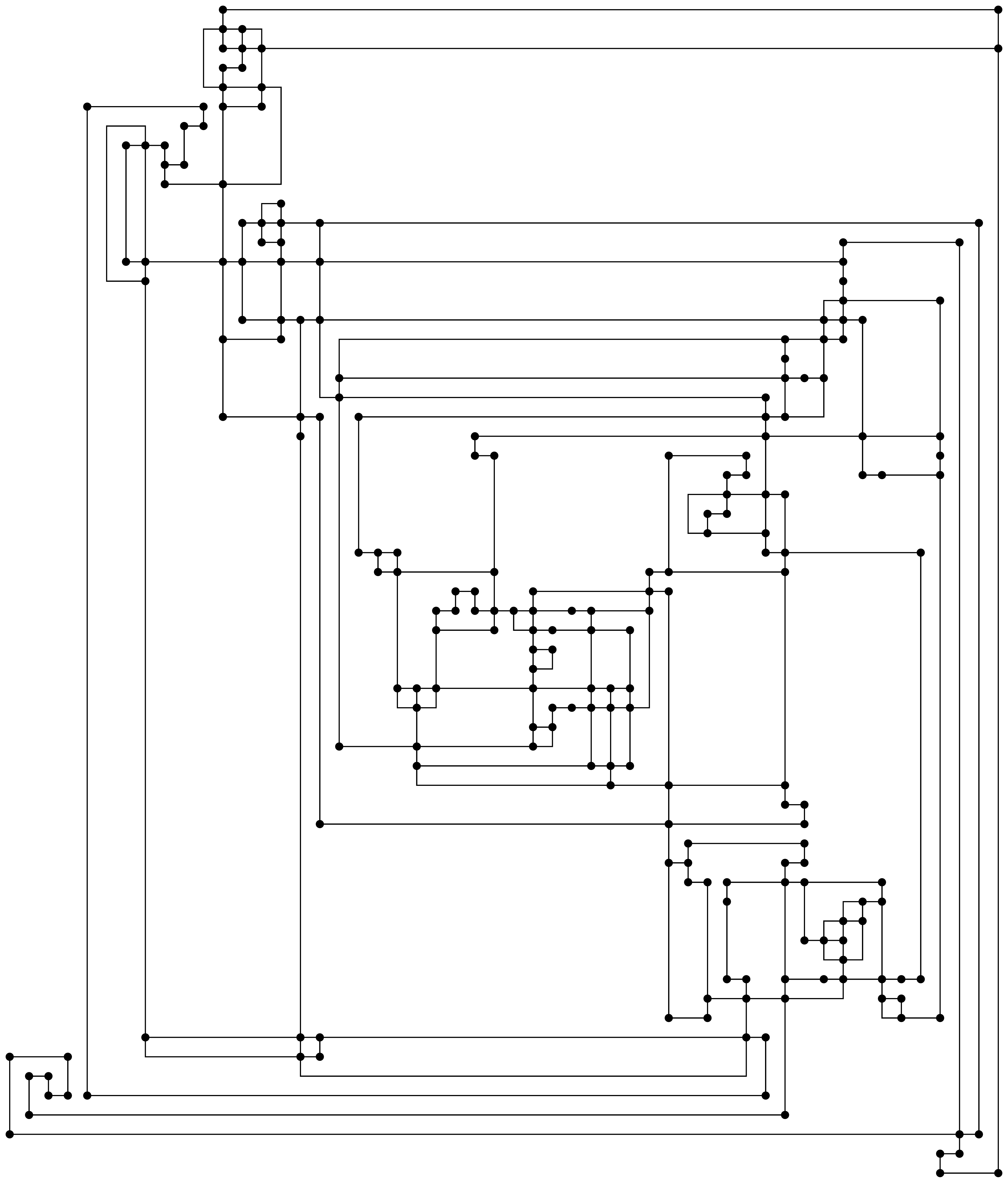}
\end{minipage}
}
\subfloat[][]{
\begin{minipage}{0.2\textwidth}
	\centering
	\includegraphics[scale=0.04]{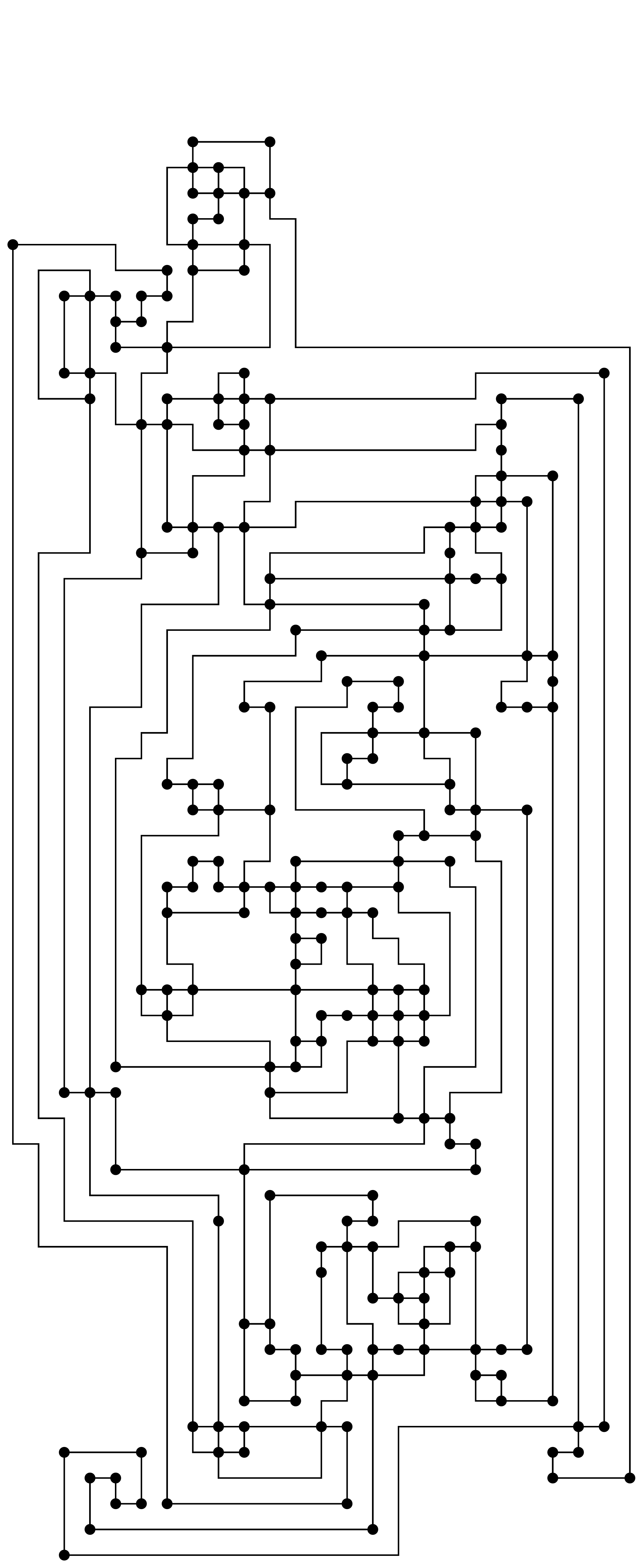}
\end{minipage}
}
\hfill
\subfloat[][]{
\begin{minipage}{0.2\textwidth}
	\centering
	\includegraphics[scale=0.08]{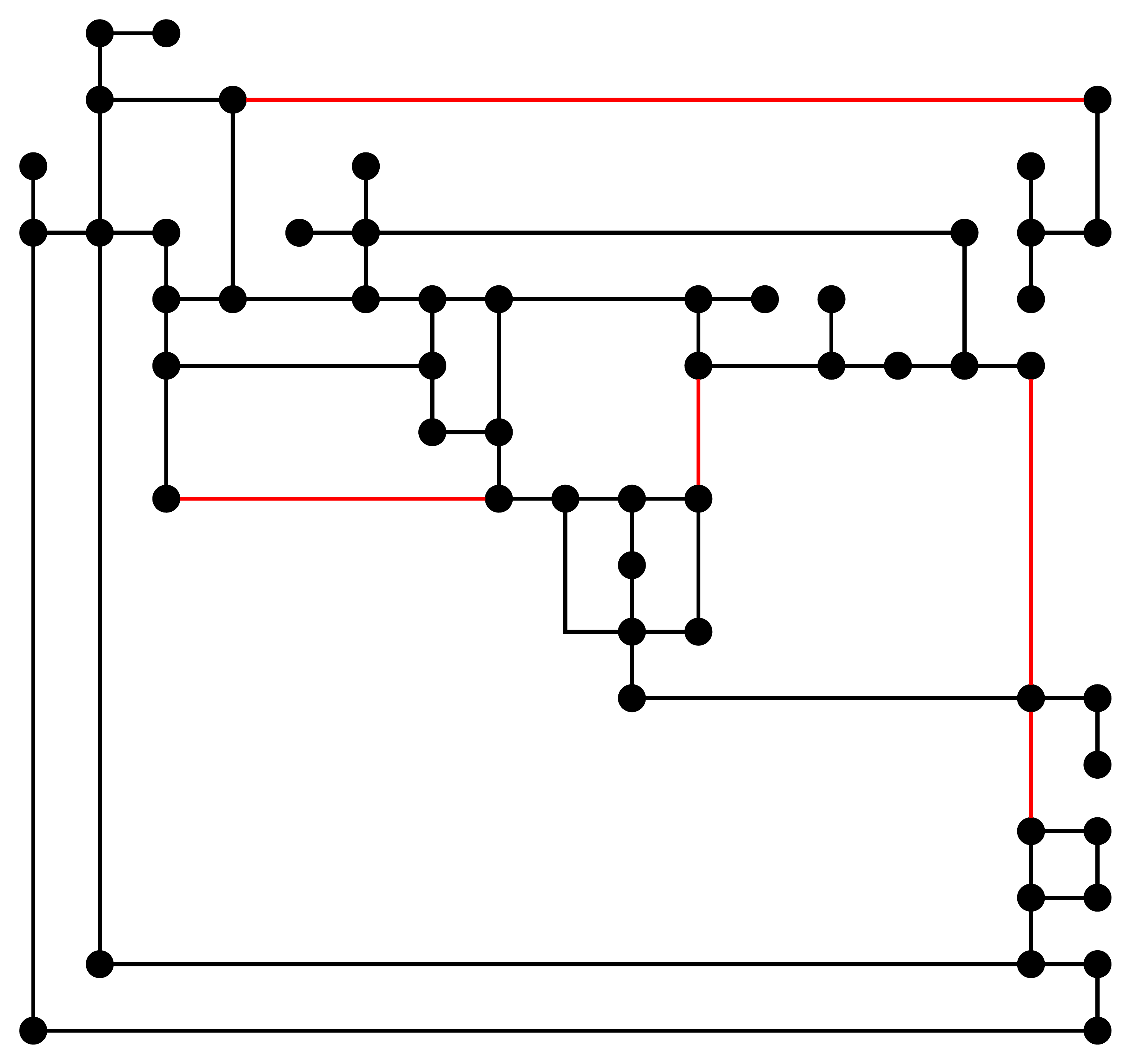}
\end{minipage}
}
\subfloat[][]{
\begin{minipage}{0.2\textwidth}
	\centering
	\includegraphics[scale=0.08]{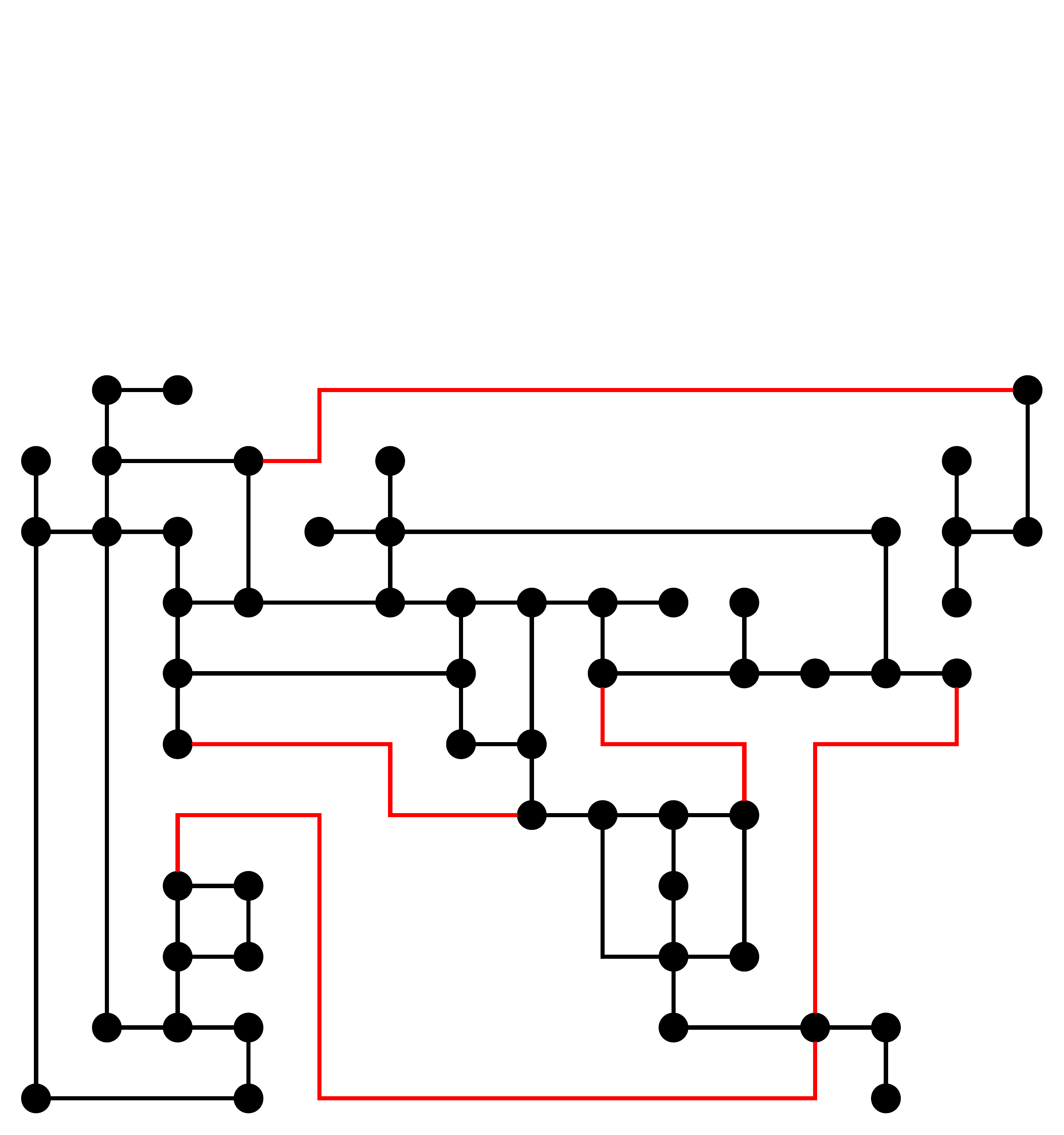}
\end{minipage}
}
\caption{Biconnected graph with 220 vertices and 304 edges and Rome graph with 55 vertices and 66 edges (after planarization and expanding high degree vertices): (a) \textsf{TRAD}: total edge length 1459, area 3060, bends 31 (b) \textsf{FF}: total edge length 1107, area 1320, bends 133 (c) \textsf{TRAD}: total edge length 168, area 240, bends 1 (d) \textsf{FF}: total edge length 140, area 140, bends 13. Edges with additional bends are red.}
\label{exampledrawings}
\end{figure}

\noindent\textbf{Acknowledgements.}
This work was supported by the DFG under
the project \emph{Compact Graph Drawing with Port Constraints}
(DFG MU 1129/10-1).
We would like to thank Gunnar Klau for providing us with most of our test instances.
We also gratefully acknowledge helpful discussions with Martin Gronemann, Sven Mallach, and Daniel Schmidt.

\bibliographystyle{splncs03}
\bibliography{FledFiveRestricted}

\begin{thebibliography}{10}
\providecommand{\url}[1]{\texttt{#1}}
\providecommand{\urlprefix}{URL }

\bibitem{Ahuja:1993:NFT:137406}
Ahuja, R.K., Magnanti, T.L., Orlin, J.B.: Network Flows: Theory, Algorithms,
  and Applications. Prentice-Hall, Inc., Upper Saddle River, NJ, USA (1993)

\bibitem{DBLP:journals/jgaa/BannisterES12}
Bannister, M.J., Eppstein, D., Simons, J.A.: Inapproximability of
  or\-tho\-go\-nal compaction. J. Graph Algorithms Appl.  16(3),  651--673
  (2012), \url{https://doi.org/10.7155/jgaa.00263}

\bibitem{TPM}
Batini, C., Nardelli, E., Tamassia, R.: A layout algorithm for data flow
  diagrams. IEEE Transactions on Software Engineering  SE-12(4),  538--546
  (1986)

\bibitem{BRIDGEMAN200053}
Bridgeman, S.S., {Di Battista}, G., Didimo, W., Liotta, G., Tamassia, R.,
  Vismara, L.: Turn-regularity and optimal area drawings of orthogonal
  representations. Comput. Geom.  16(1),  53--93 (2000),
  \url{https://doi.org/10.1016/S0925-7721(99)00054-1}

\bibitem{OGDF}
Chimani, M., Gutwenger, C., J{\"u}nger, M., Klau, G.W., Klein, K., Mutzel, P.:
  The {O}pen {G}raph {D}rawing {F}ramework ({OGDF}). In: Tamassia, R. (ed.)
  Handbook of Graph Drawing and Visualization, chap.~17, pp. 543--569. CRC
  Press (2013)

\bibitem{DBLP:journals/jgaa/CornelsenK12}
Cornelsen, S., Karrenbauer, A.: Accelerated bend minimization. J. Graph
  Algorithms Appl.  16(3),  635--650 (2012),
  \url{http://dx.doi.org/10.7155/jgaa.00265}

\bibitem{DBLP:books/ph/BattistaETT99}
{Di Battista}, G., Eades, P., Tamassia, R., Tollis, I.G.: Graph Drawing:
  Algorithms for the Visualization of Graphs. Prentice-Hall, Upper Saddle
  River, NJ, USA (1999)

\bibitem{DIBATTISTA1997303}
{Di Battista}, G., Garg, A., Liotta, G., Tamassia, R., Tassinari, E., Vargiu,
  F.: An experimental comparison of four graph drawing algorithms. Comput.
  Geom.  7,  303--325 (1997),
  \url{https://doi.org/10.1016/S0925-7721(96)00005-3}

\bibitem{4M}
F{\"{o}}{\ss}meier, U., He{\ss}, C., Kaufmann, M.: On improving orthogonal
  drawings: The 4m-algorithm. In: Whitesides, S. (ed.) Graph Drawing, 6th
  International Symposium, GD'98, Proceedings. LNCS, vol. 1547, pp. 125--137.
  Springer (1998), \url{https://doi.org/10.1007/3-540-37623-2$\_$10}

\bibitem{Holzhauser:2016:BMC:2911577.2911605}
Holzhauser, M., Krumke, S.O., Thielen, C.: Budget-constrained minimum cost
  flows. J. Comb. Optim.  31(4),  1720--1745 (2016),
  \url{https://doi.org/10.1007/s10878-015-9865-y}

\bibitem{DBLP:conf/dagstuhl/1999dg}
Kaufmann, M., Wagner, D. (eds.): Drawing Graphs, Methods and Models, LNCS, vol.
  2025. Springer (2001)

\bibitem{DBLP:phd/de/Klau2004}
Klau, G.W.: A combinatorial approach to orthogonal placement problems. Ph.D.
  thesis, Saarland University, Saarbr{\"{u}}cken, Germany (2002),
  \url{http://scidok.sulb.uni-saarland.de/volltexte/2004/196/index.html}

\bibitem{Klau2001}
Klau, G.W., Klein, K., Mutzel, P.: An experimental comparison of orthogonal
  compaction algorithms. In: Marks, J. (ed.) Graph Drawing, 8th International
  Symposium, {GD} 2000, Proceedings. LNCS, vol. 1984, pp. 37--51. Springer
  (2001), \url{https://doi.org/10.1007/3-540-44541-2$\_$5}

\bibitem{DBLP:conf/ipco/KlauM99}
Klau, G.W., Mutzel, P.: Optimal compaction of orthogonal grid drawings. In:
  Cornu{\'{e}}jols, G., Burkard, R.E., Woeginger, G.J. (eds.) Integer
  Programming and Combinatorial Optimization, 7th International {IPCO}
  Conference, 1999, Proceedings. LLNCS, vol. 1610, pp. 304--319. Springer
  (1999), \url{https://doi.org/10.1007/3-540-48777-8$\_$23}

\bibitem{Lengauer:1990:CAI:92429}
Lengauer, T.: Combinatorial Algorithms for Integrated Circuit Layout. John
  Wiley \& Sons, Inc., New York, NY, USA (1990)

\bibitem{Patrignani:1999:COC:645932.673201}
Patrignani, M.: On the complexity of orthogonal compaction. In: Algorithms and
  Data Structures, 6th International Workshop, {WADS} '99, Proceedings. LNCS,
  vol. 1663, pp. 56--61. Springer (1999),
  \url{https://doi.org/10.1007/3-540-48447-7$\_$7}

\bibitem{DBLP:reference/crc/2013gd}
Tamassia, R. (ed.): Handbook on Graph Drawing and Visualization. Chapman and
  Hall/CRC (2013),
  \url{https://www.crcpress.com/Handbook-of-Graph-Drawing-and-Visualization/Tamassia/9781584884125}

\end{thebibliography}

\newpage

\section*{Appendix}

\subsection*{Additional Experimental Results}

Figures~\ref{quasi_points} and \ref{rome_points} show the absolute values for the quasi trees and the Rome graphs.
The three plots in Fig.~\ref{bi_points2} complement the absolute values for the biconnected graphs of Fig.~\ref{bi_points}.
Figures~\ref{quasi_bends-area-total} and \ref{rome_bends-area-total} display the relation between additional bends and improvement of area and total edge length for the quasi trees and for the Rome graphs.

\begin{figure}[H]
\centering
\begin{minipage}{0.48 \textwidth}
	\includegraphics[scale=0.4]{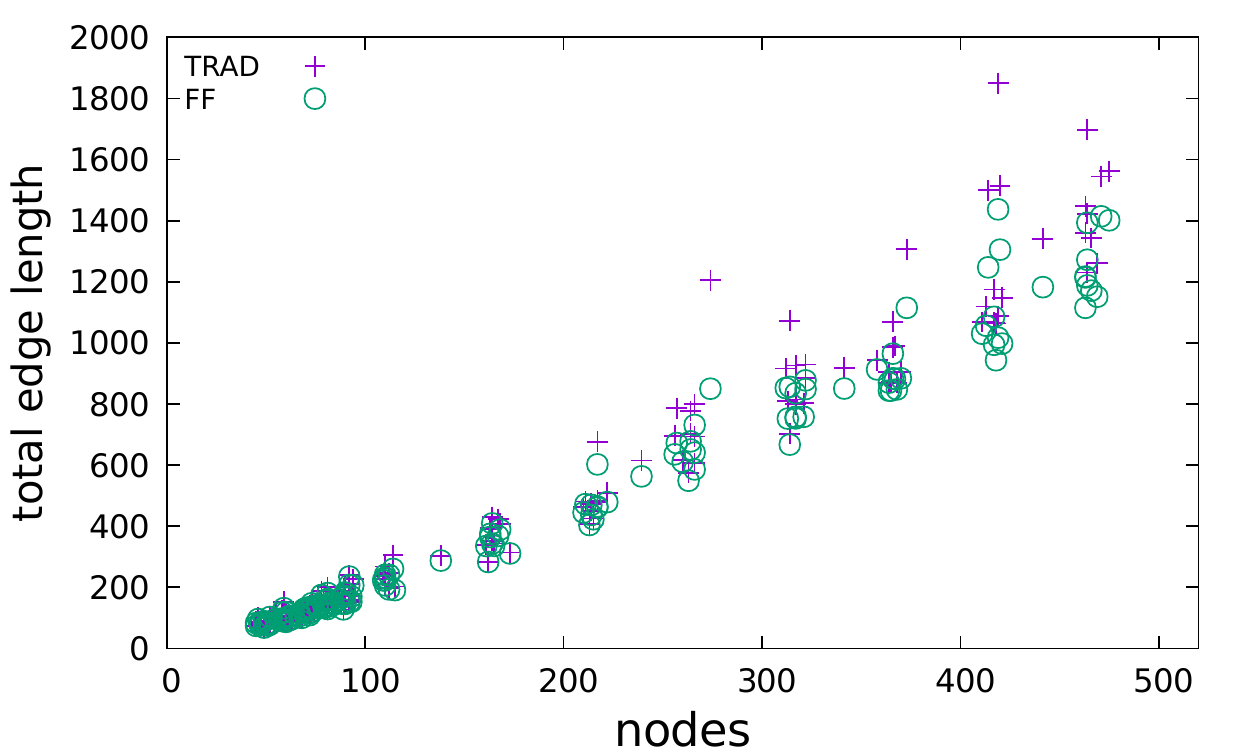}
\end{minipage}
\hfill
\begin{minipage}{0.48\textwidth}
	\includegraphics[scale=0.4]{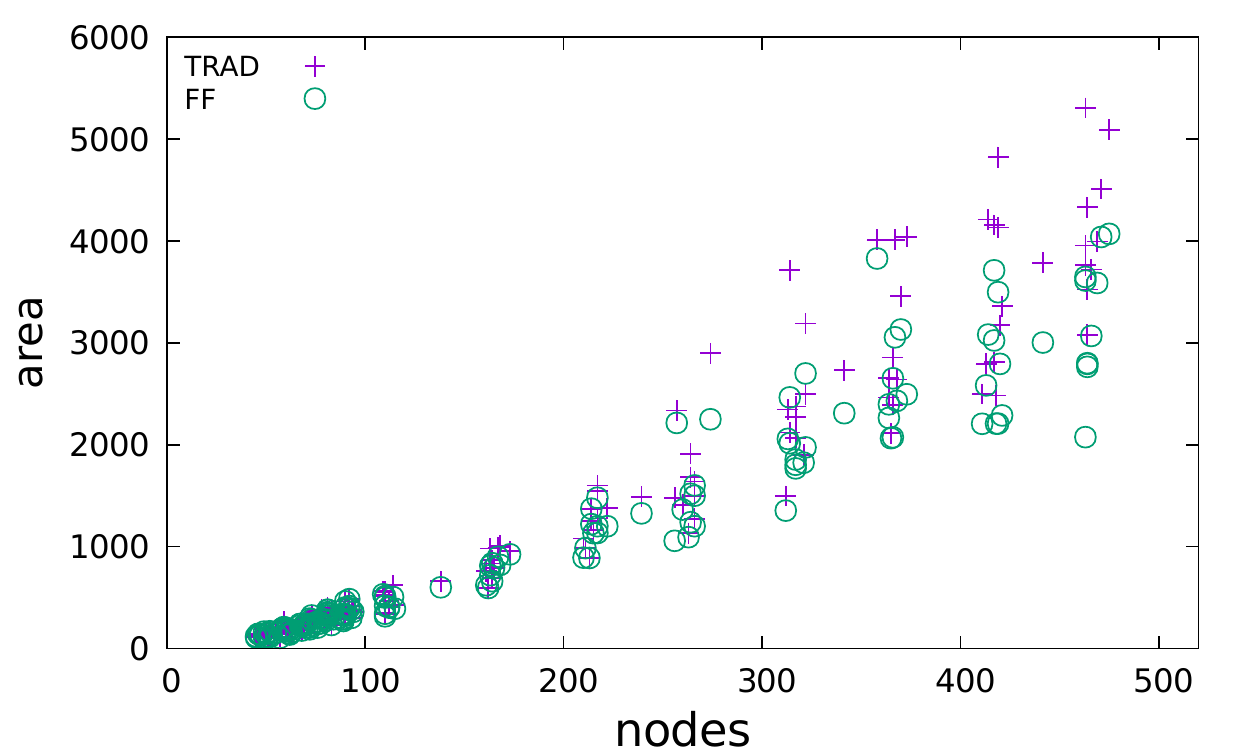}
\end{minipage}

\begin{minipage}{0.48\textwidth}
	\includegraphics[scale=0.4]{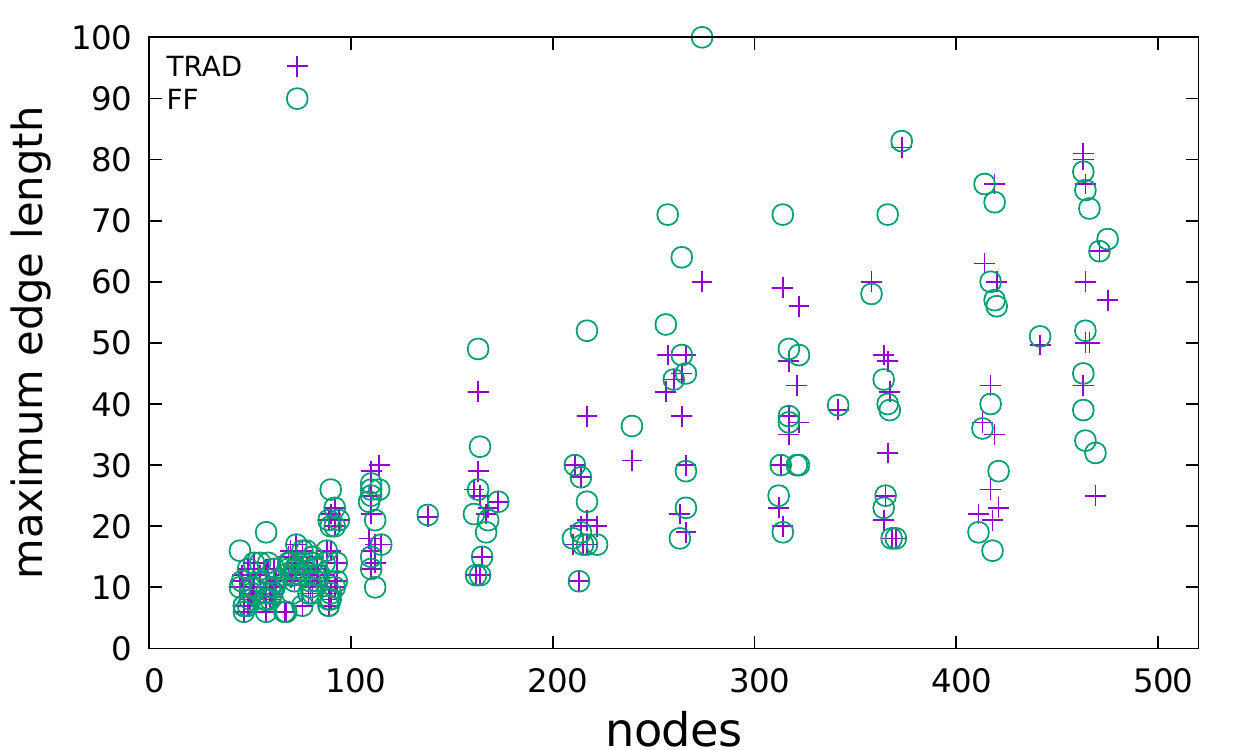}
\end{minipage}
\hfill
\begin{minipage}{0.48\textwidth}
	\includegraphics[scale=0.4]{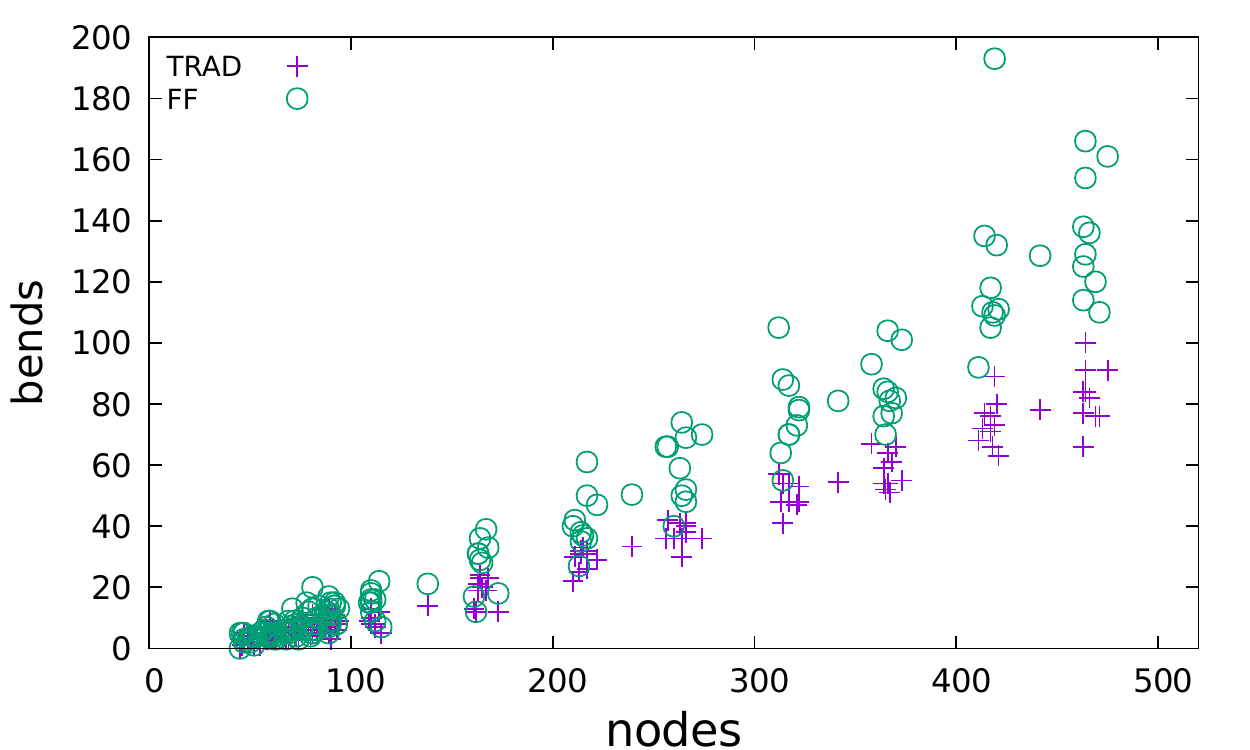}
\end{minipage}

\begin{minipage}{0.48\textwidth}
	\includegraphics[scale=0.4]{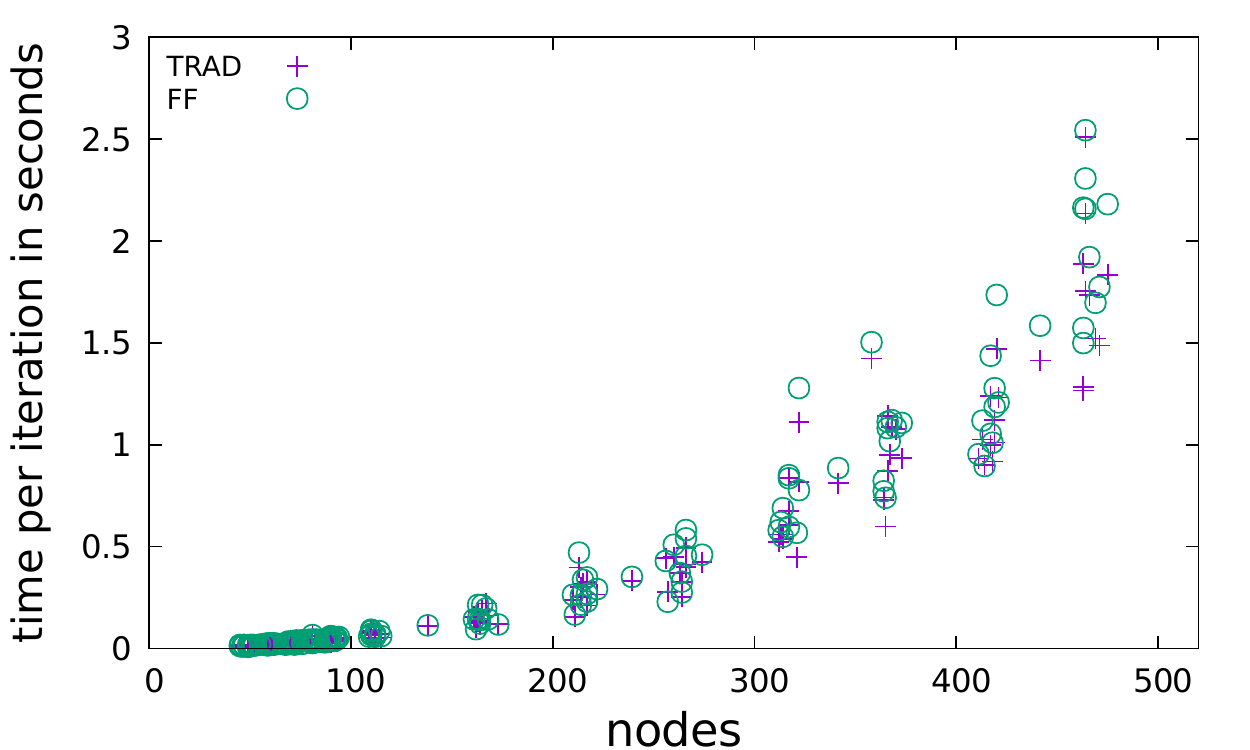}
\end{minipage}
\hfill
\begin{minipage}{0.48\textwidth}
	\includegraphics[scale=0.4]{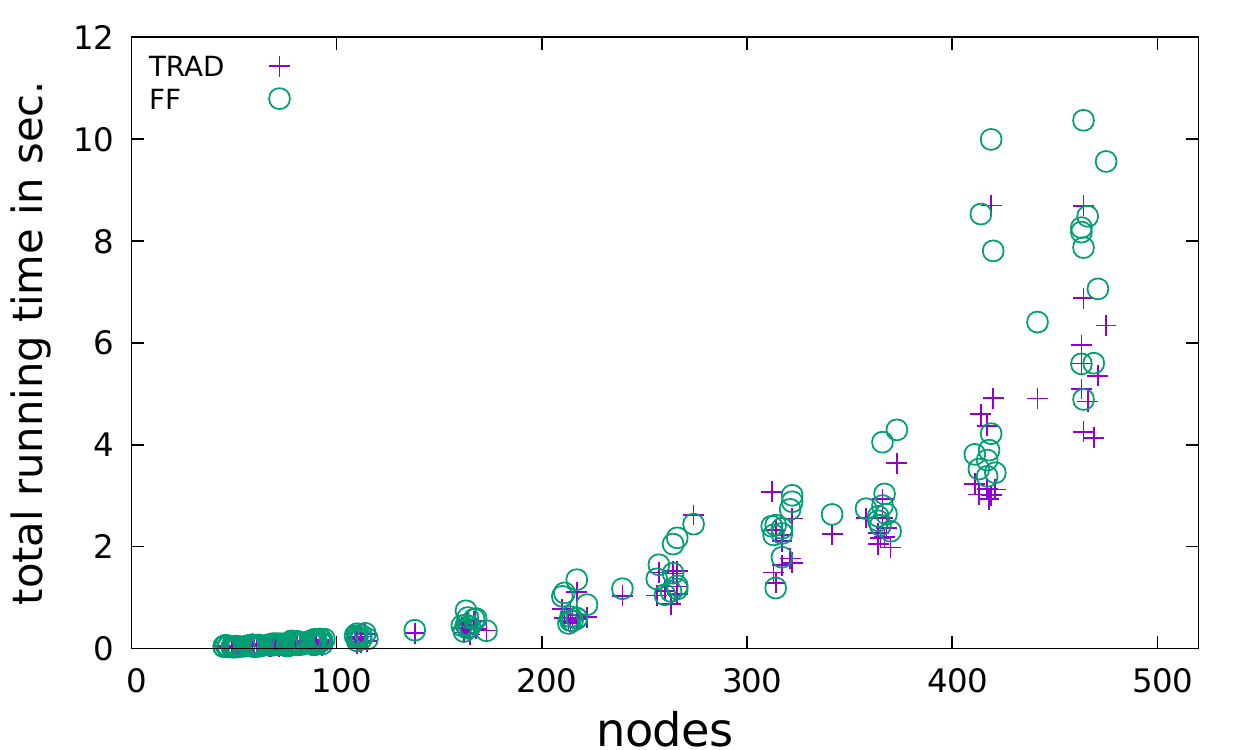}
\end{minipage}

\begin{minipage}{0.48\textwidth}
\includegraphics[scale=0.4]{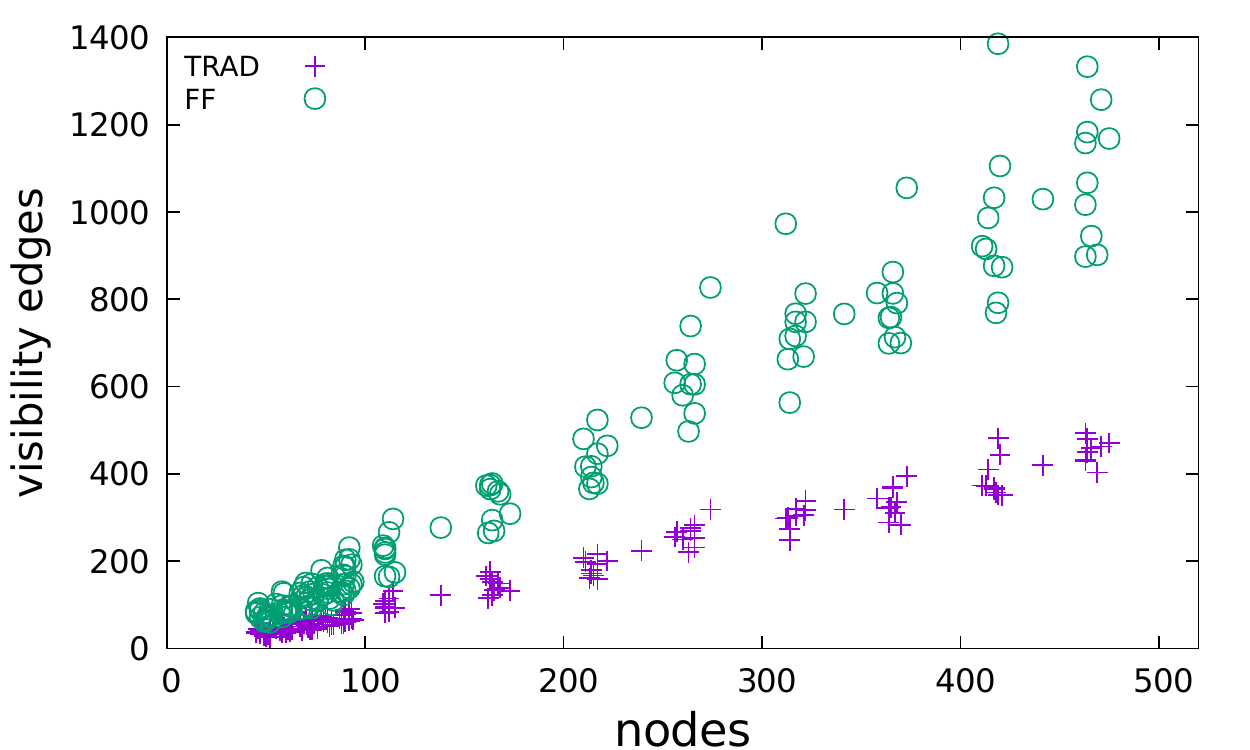}
\end{minipage}
\caption[]{Absolute results of \textsf{TRAD} and \textsf{FF} for quasi-trees}
\label{quasi_points}
\end{figure}

\begin{figure}[htb]
\centering
\begin{minipage}{0.48\textwidth}
	\includegraphics[scale=0.4]{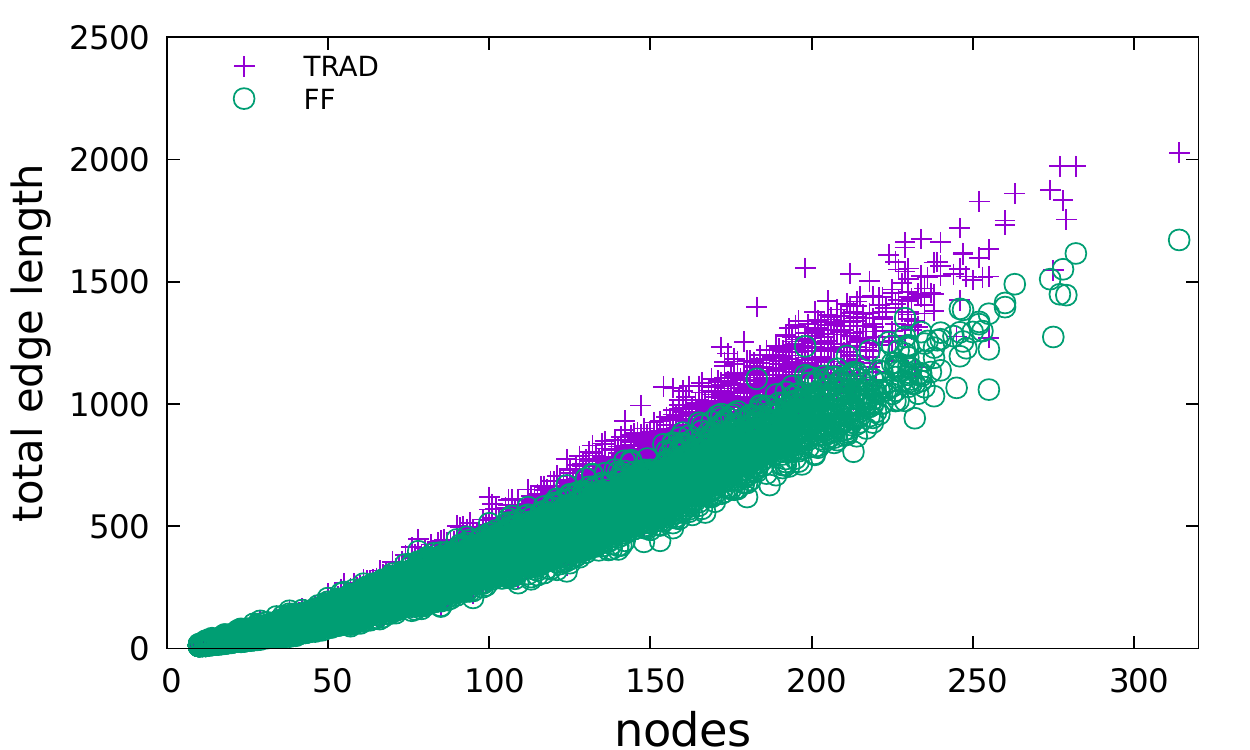}
\end{minipage}
\hfill
\begin{minipage}{0.48\textwidth}
	\includegraphics[scale=0.4]{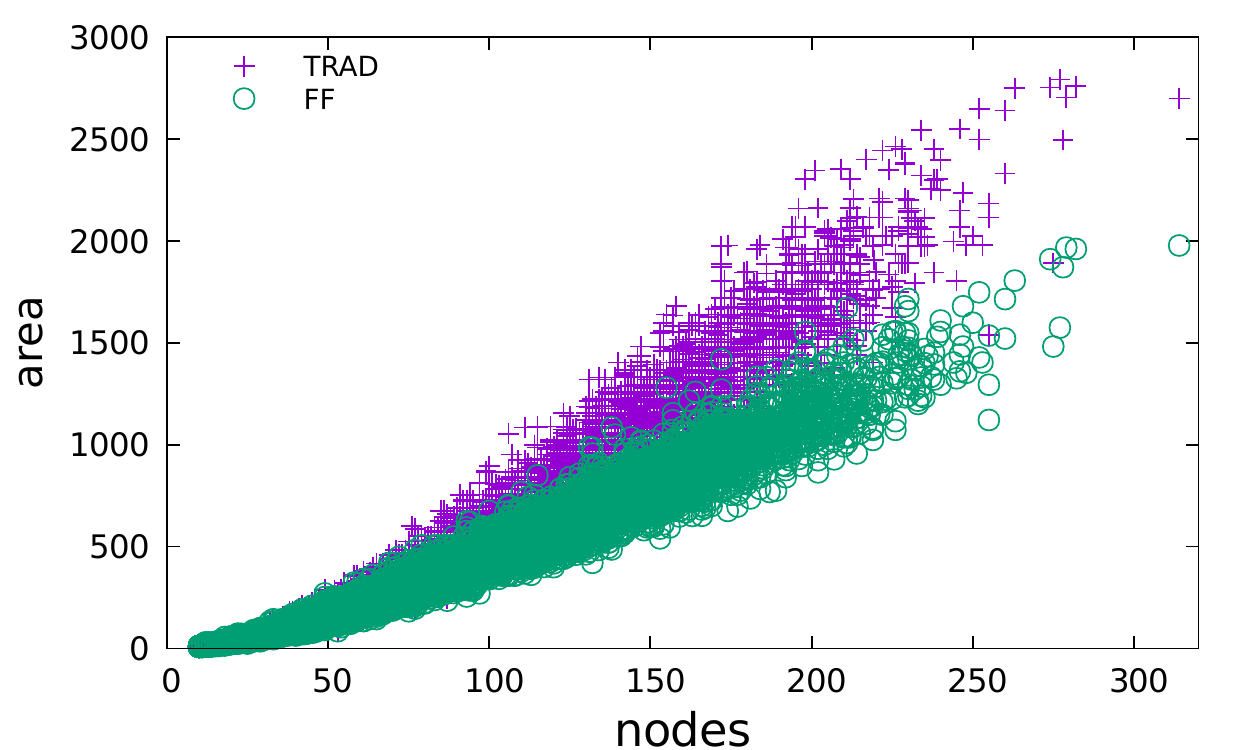}
\end{minipage}

\begin{minipage}{0.48\textwidth}
	\includegraphics[scale=0.4]{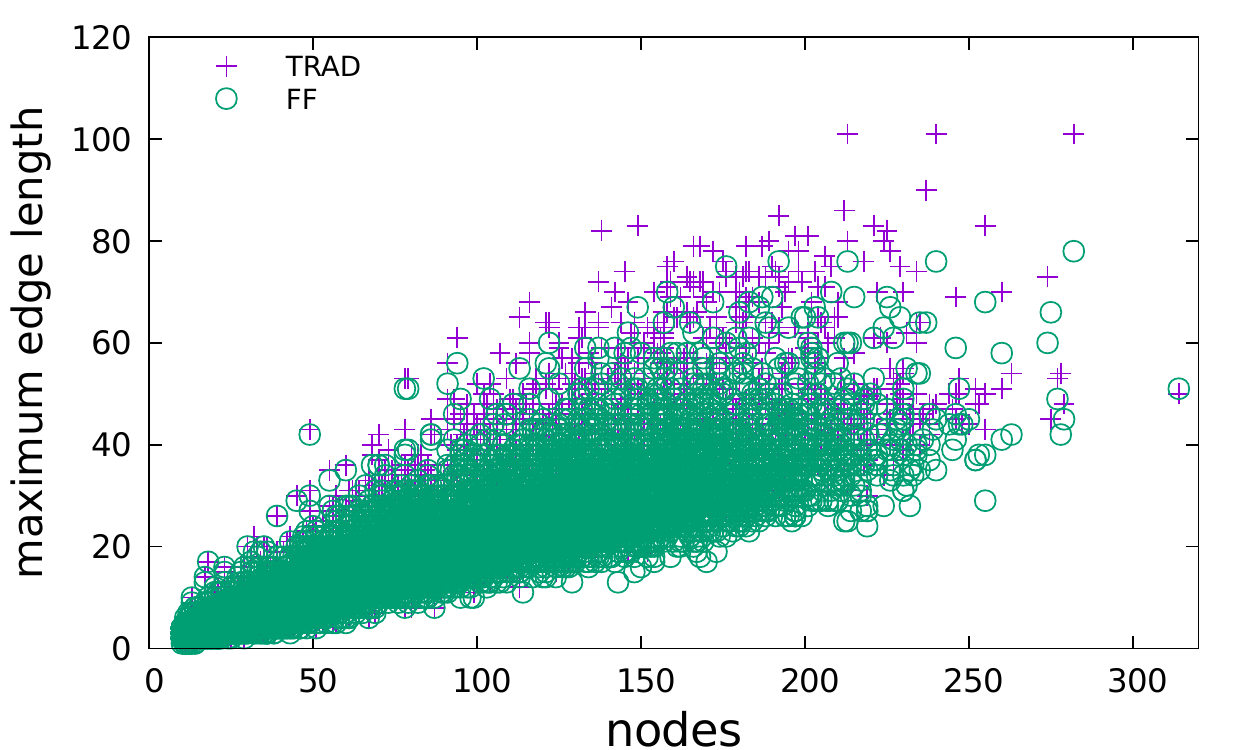}
\end{minipage}
\hfill
\begin{minipage}{0.48\textwidth}
	\includegraphics[scale=0.4]{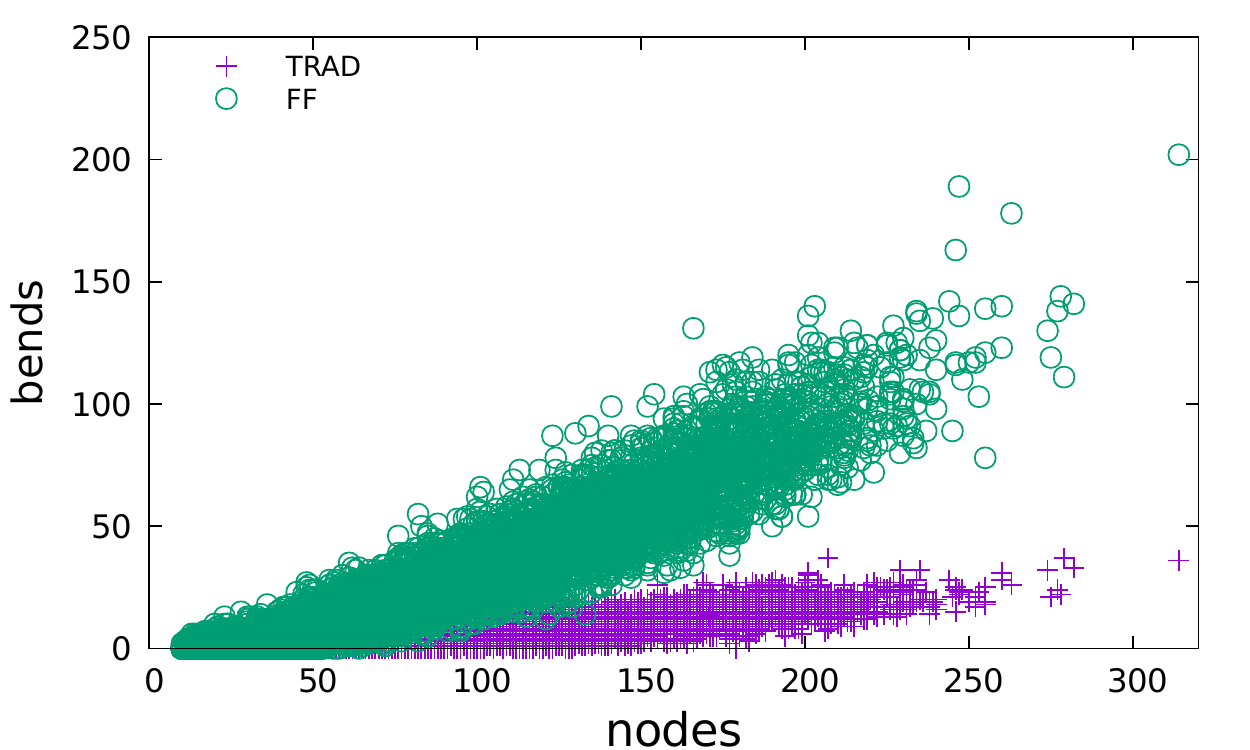}
\end{minipage}

\begin{minipage}{0.48\textwidth}
	\includegraphics[scale=0.4]{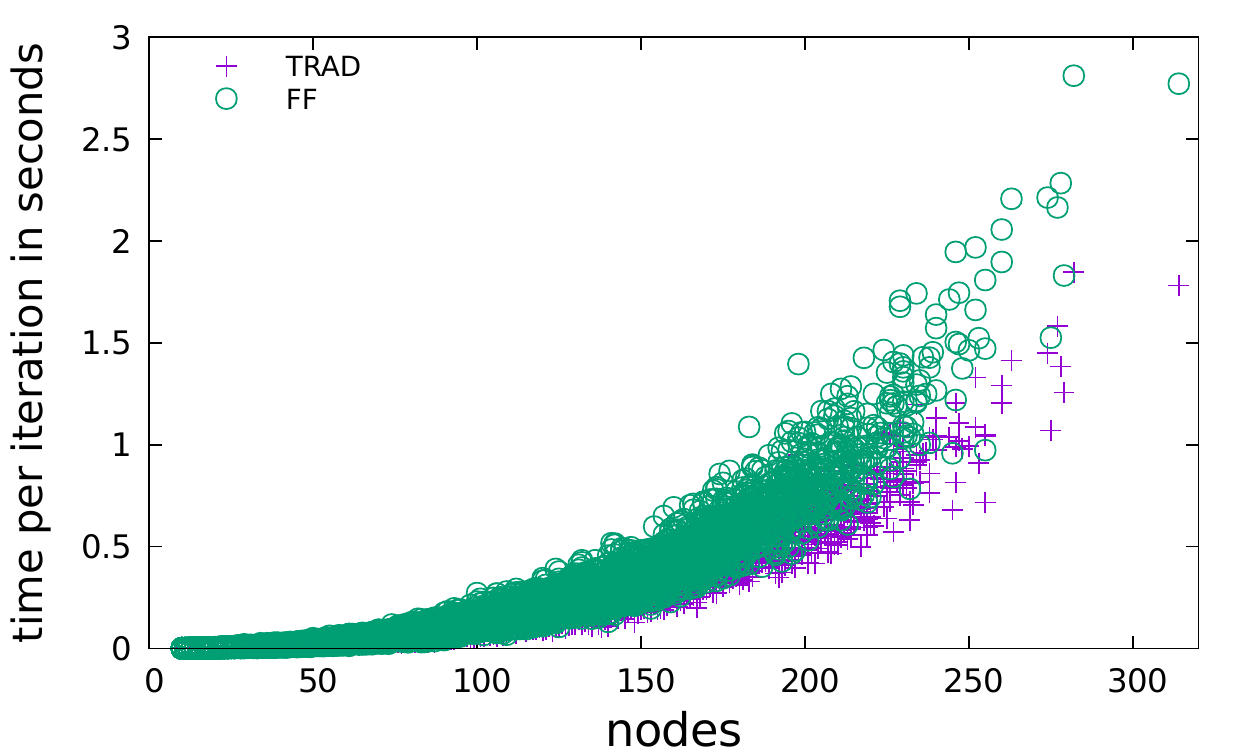}
\end{minipage}
\hfill
\begin{minipage}{0.48\textwidth}
	\includegraphics[scale=0.4]{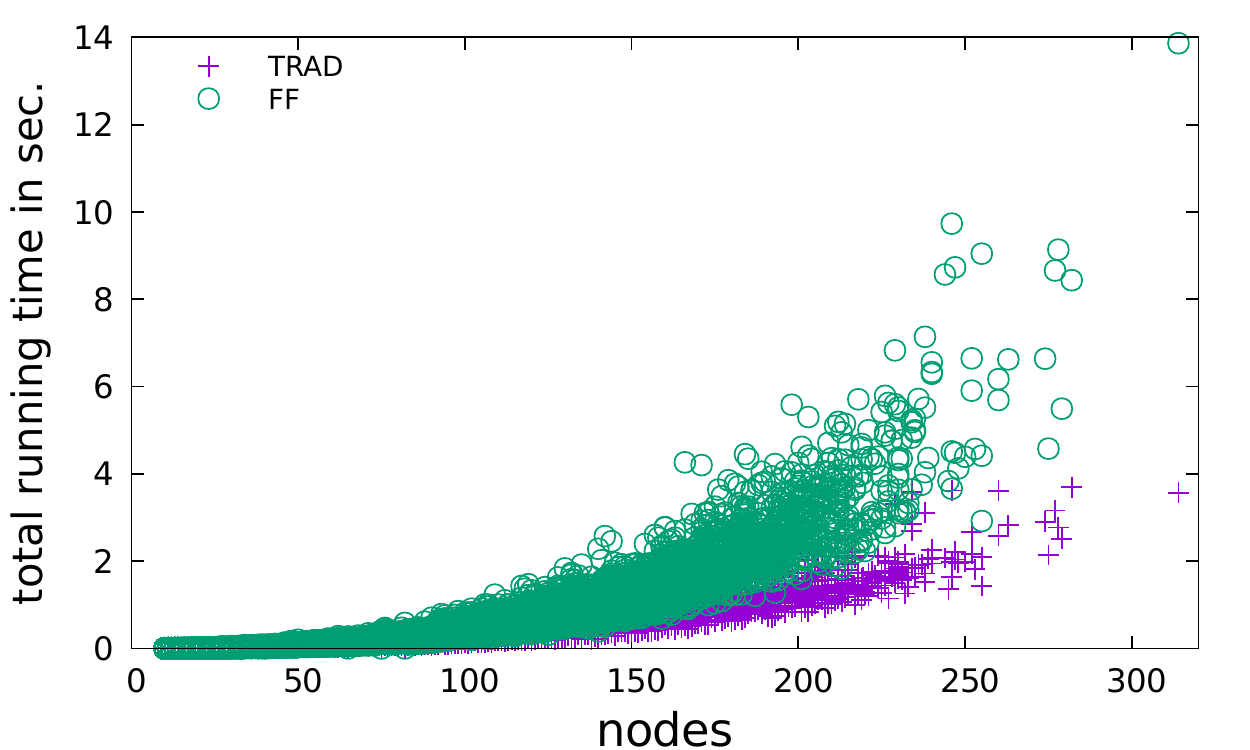}
\end{minipage}

\begin{minipage}{0.48\textwidth}
\includegraphics[scale=0.4]{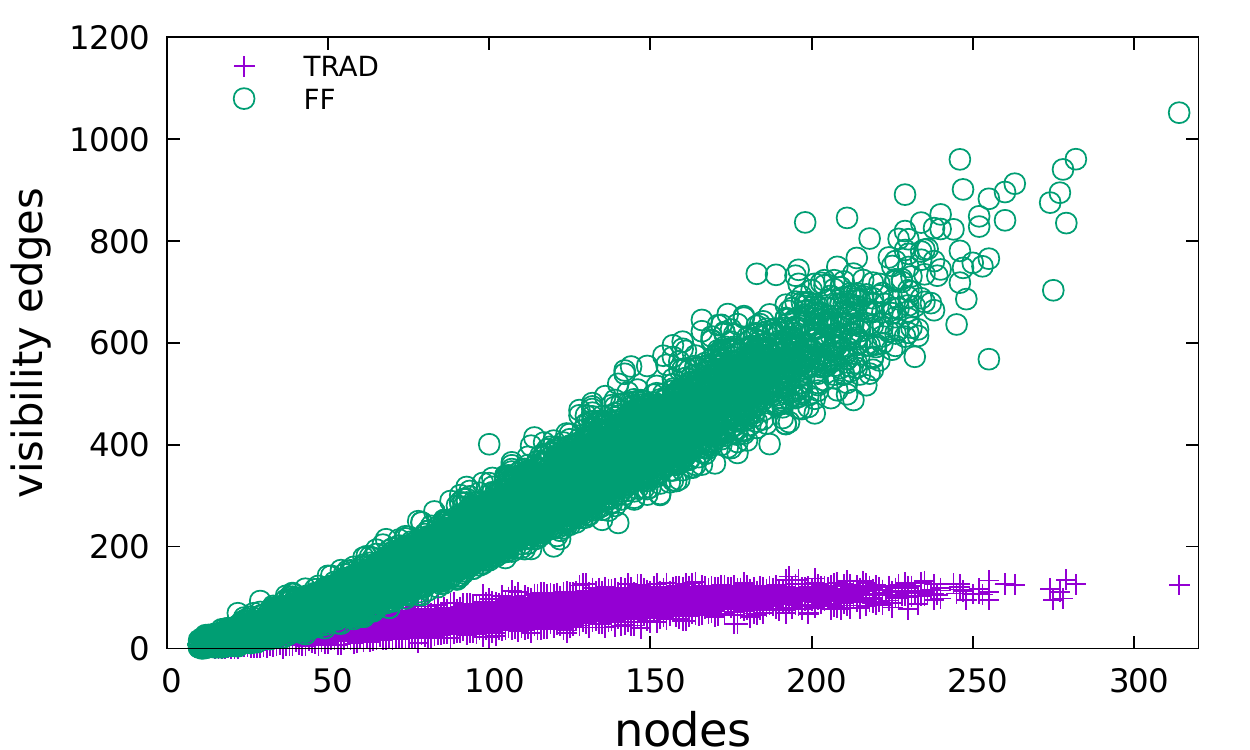}
\end{minipage}
\caption[]{Absolute results of \textsf{TRAD} and \textsf{FF} for Rome graphs}
\label{rome_points}
\end{figure}

\begin{figure}[htb]
\centering

\begin{minipage}{0.48\textwidth}
	\includegraphics[scale=0.4]{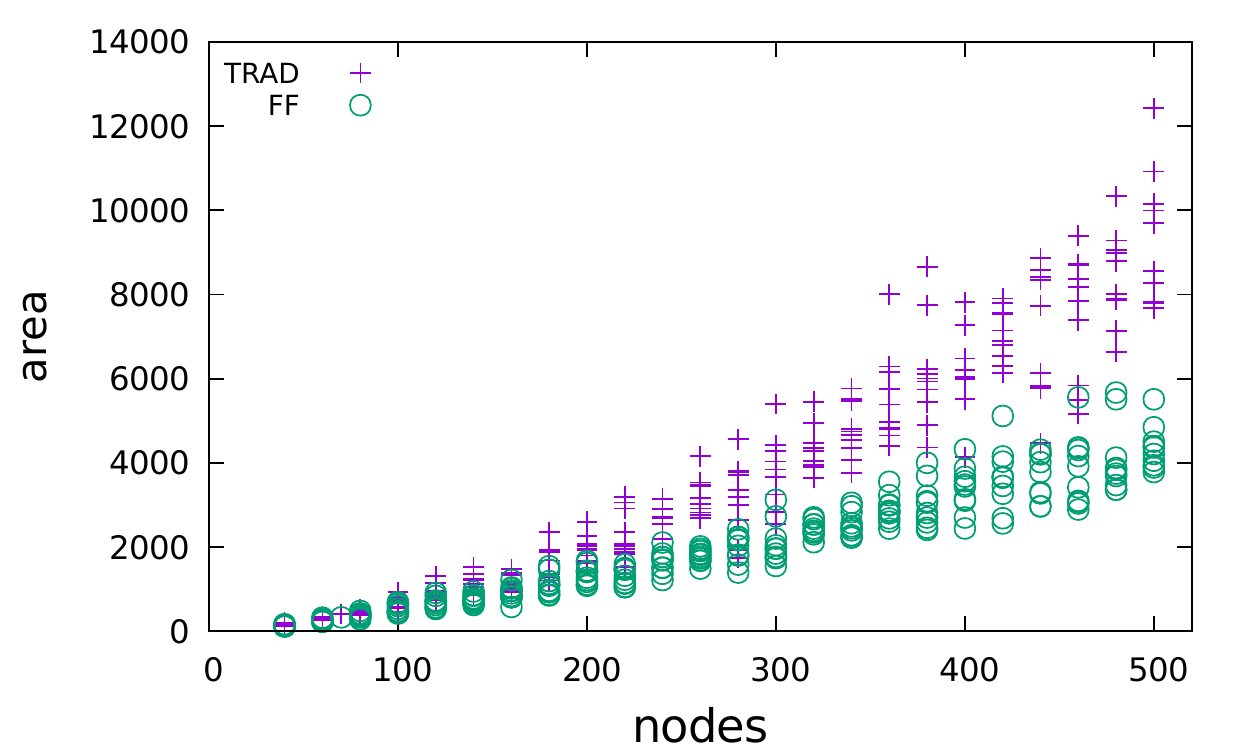}
\end{minipage}
\hfill
\begin{minipage}{0.48\textwidth}
	\includegraphics[scale=0.4]{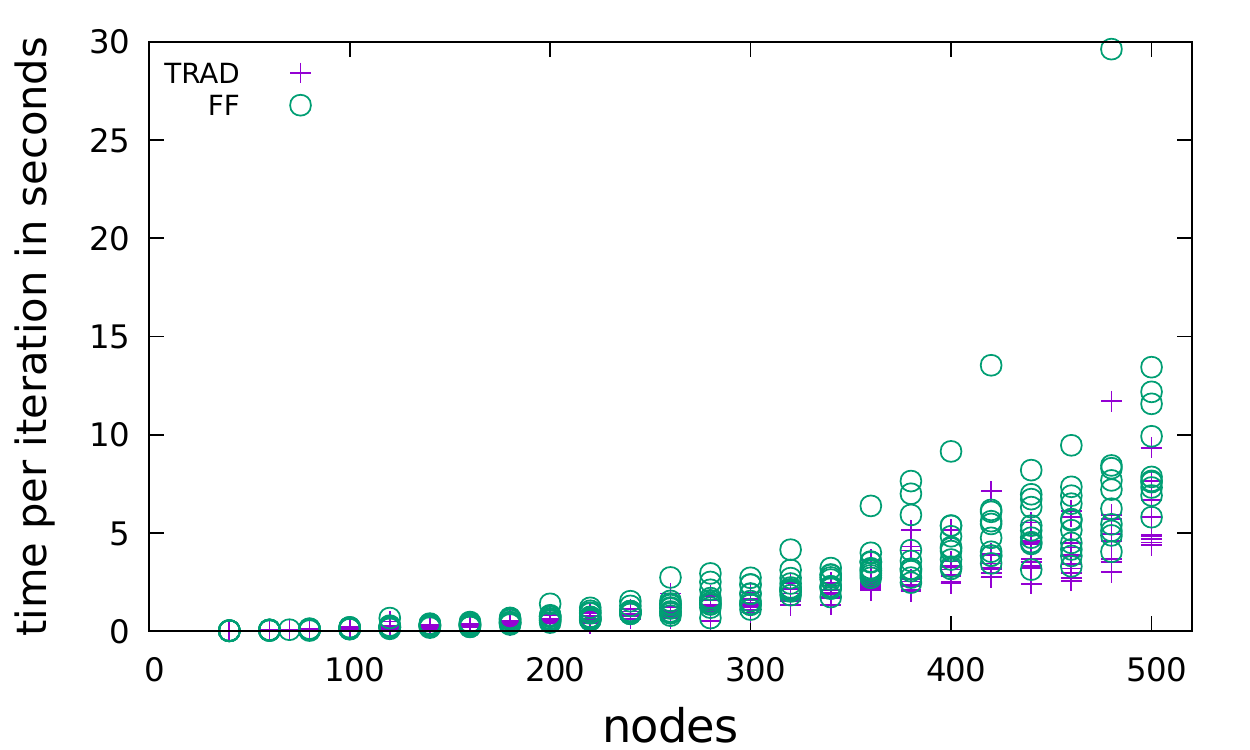}
\end{minipage}

\begin{minipage}{0.48\textwidth}
\includegraphics[scale=0.4]{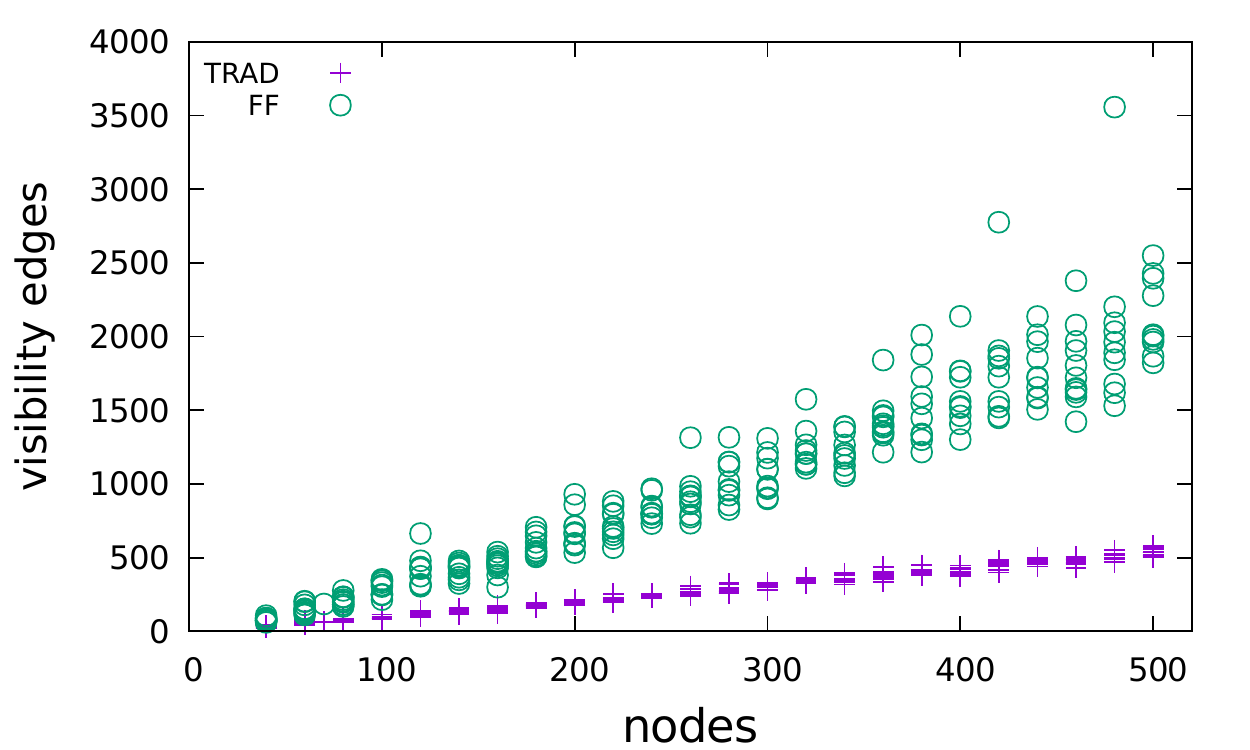}
\end{minipage}
\caption[]{Remaining absolute results of \textsf{TRAD} and \textsf{FF} for biconnected graphs}
\label{bi_points2}
\end{figure}

\begin{figure}[htb]
\begin{minipage}{0.48\textwidth}
\includegraphics[scale=0.4]{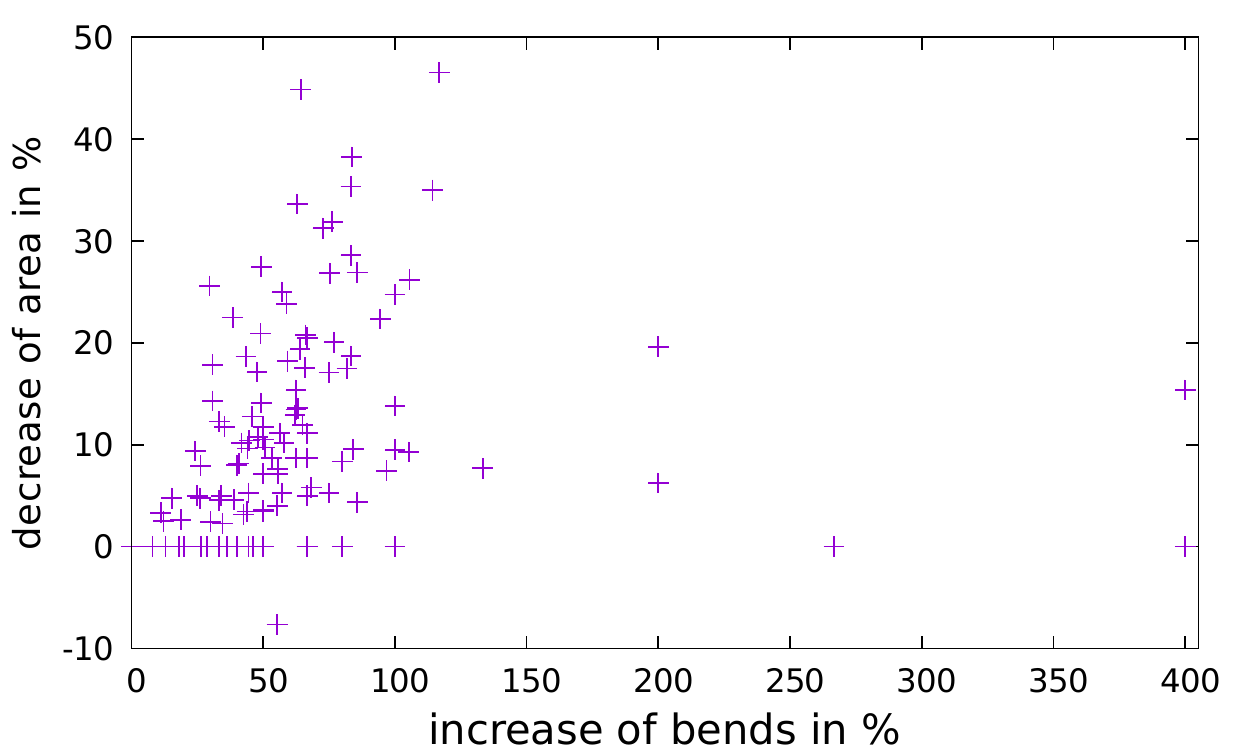}
\end{minipage}
\hfill
\begin{minipage}{0.48 \textwidth}
\includegraphics[scale=0.4]{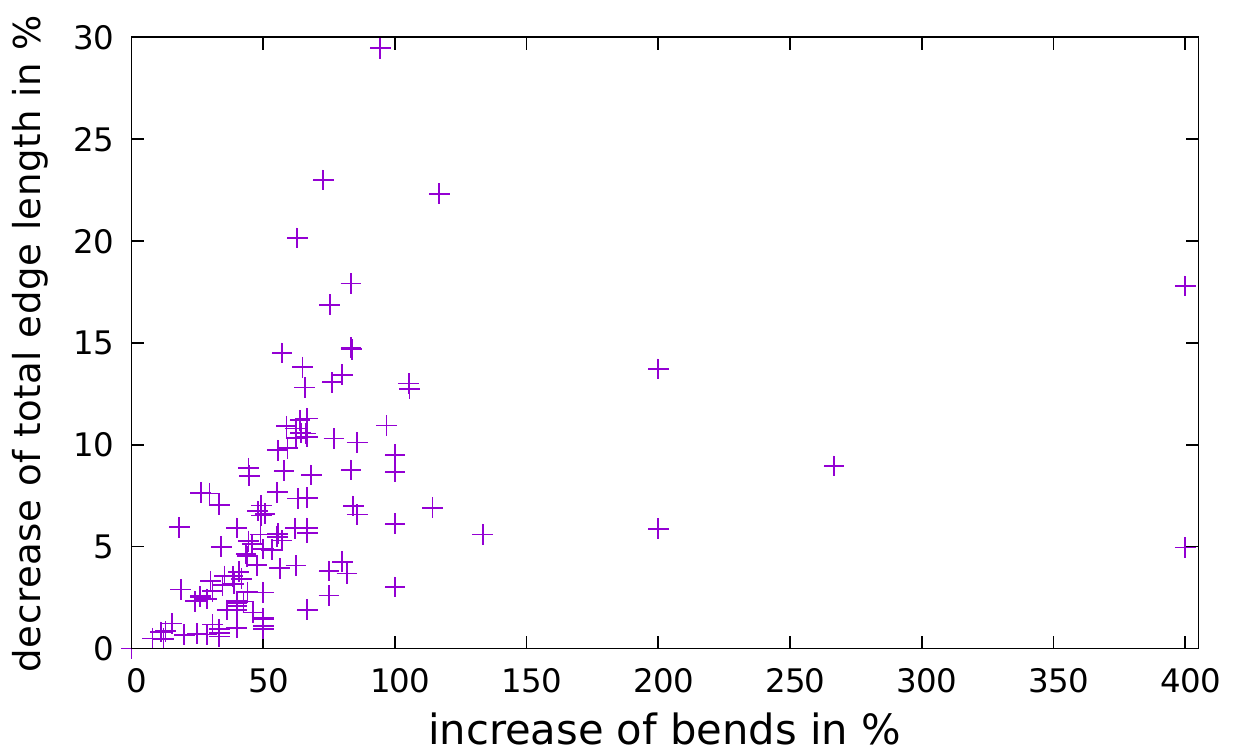}
\end{minipage}
\caption{Relation between additional bends and improvement of area and total edge length for the quasi trees}
\label{quasi_bends-area-total}
\end{figure}

\begin{figure}[htb]
\begin{minipage}{0.48\textwidth}
\includegraphics[scale=0.4]{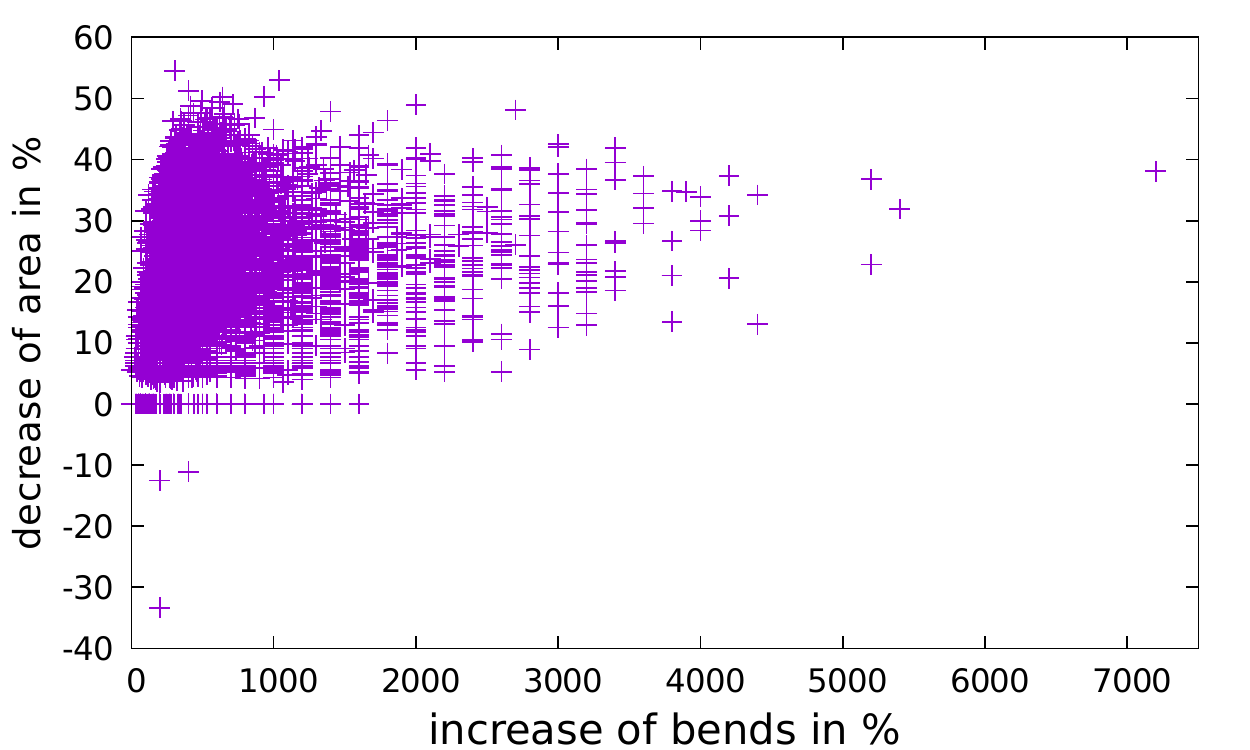}
\end{minipage}
\hfill
\begin{minipage}{0.48\textwidth}
\includegraphics[scale=0.4]{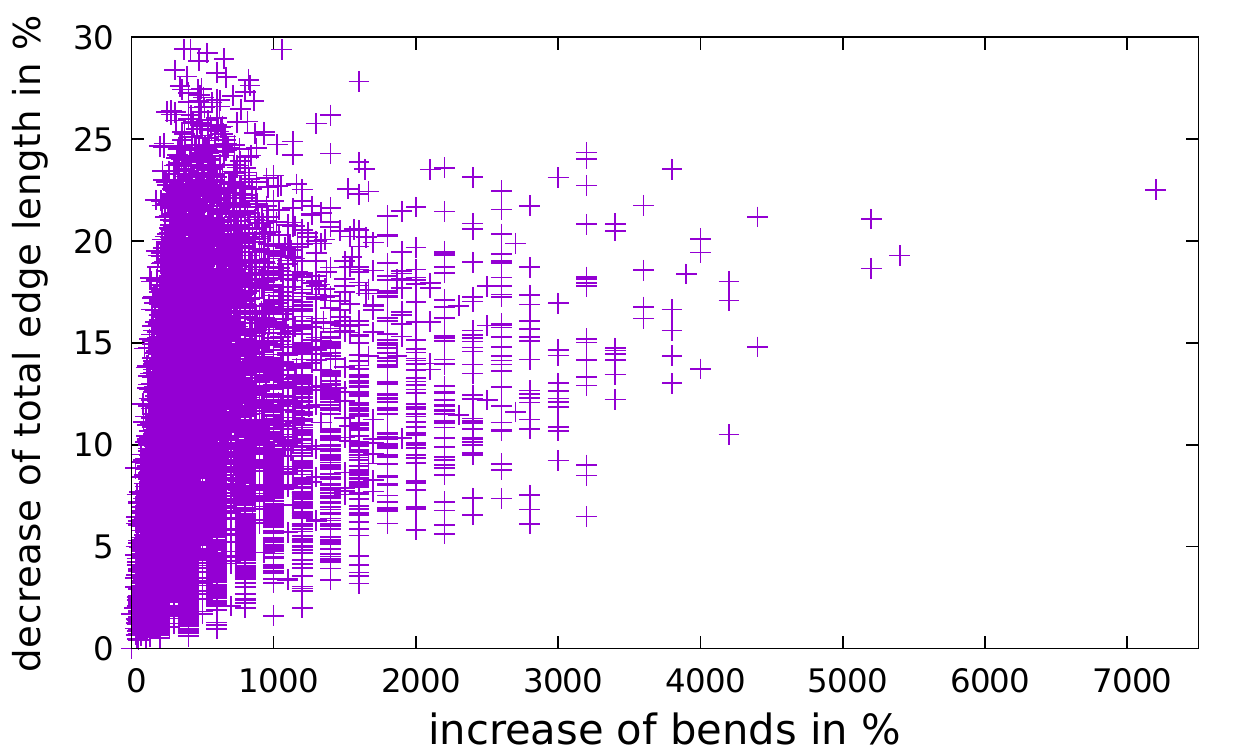}
\end{minipage}
\caption{Relation between additional bends and improvement of area and total edge length for the Rome graphs}
\label{rome_bends-area-total}
\end{figure}

\end{document}